\newtheorem{theorem}{Theorem}
\newtheorem{lemma}[theorem]{Lemma}
\newtheorem{definition}[theorem]{Definition}
\newtheorem{example}[theorem]{Example}
\newcommand{\Sigmaast}{\Sigma^\ast}
\newcommand{\Sigmanod}{\Sigma_{\mkern1mu 0}}
\newcommand{\Sigmanodast}{\Sigma^\ast_{\mkern1mu 0}}
\newcommand{\mbR}{\mathbin{\R}}
\newcommand{\mkzero}{\mkern1mu 0 \mkern1mu}
\newcommand{\lc}{\lbrace \:}
\newcommand{\qprime}{{q \mkern1mu}'}
\newcommand{\rc}{\: \rbrace}
\newcommand{\singleton}[1]{\lbrace {#1} \rbrace}
\newcommand{\size}[1]{\#{#1}}
\newcommand{\sort}{\textsl{sort}}
\newcommand{\toG}{\Rightarrow}
\newcommand{\toGbar}{\Rightarrow}
\newcommand{\idr}{\Rightarrow_{i/d}}
\newcommand{\idrast}{\Rightarrow_{i/d}^{\ast}}
\newcommand{\toGast}{\Rightarrow^\ast}
\newcommand{\toa}{\stackrel{a}{\rightarrow}}
\newcommand{\toaone}{\xrightarrow{a_1}}
\newcommand{\toas}{\xrightarrow{a_s}}
\newcommand{\togkpk}{\Rightarrow_{g_k,p_k}}
\newcommand{\togp}{\Rightarrow_{g,p}}
\newcommand{\togrpr}{\Rightarrow_{g_r,p_r}}
\newcommand{\toprimeb}{\mathrel{\stackrel{b}{\rightarrow}\mkern-3mu{}'}}
\newcommand{\toprimebone}{\mathrel{\hbox{$\xrightarrow{b_1}\mkern-3mu{}'$}}}
\newcommand{\toprimebs}{\mathrel{\hbox{$\xrightarrow{b_s}\mkern-3mu{}'$}}}
\newcommand{\toprimeepsilon}{\mathrel{\xrightarrow{\varepsilon}\mkern-3mu{}'}}
\newcommand{\toprime}{\rightarrow'}
\newcommand{\dom}{\mathit{dom}}
\newcommand{\rng}{\mathit{rng}}
\newcommand{\slice}{\mathit{s\ell}}
\newcommand{\yield}{\mathit{yield}}
\newcommand{\G}{G}
\renewcommand{\L}{\mathcal{L}}
\newcommand{\R}{\sim}
\newcommand{\X}{\mathcal{X}}
\newcommand{\blankline}{\vspace{0.8\baselineskip}}
\newcounter{alphaitemcnt}
\title{Combining Insertion and Deletion in RNA-editing \\
  Preserves Regularity} 
\author{%
  E.P. de Vink${}^\ast$
  \institute{%
    Department of Mathematics and Computer Science,
    Technische Universiteit Eindhoven 
    \smallskip \\
    Centrum Wiskunde en Informatica,
    Amsterdam
  }
  \and 
  H. Zantema
  \institute{%
    Department of Mathematics and Computer Science,
    Technische Universiteit Eindhoven 
    \smallskip \\
    Institute for Computing and Information Sciences,
    Radboud University Nijmegen}
  \and 
  D. Bo\v{s}na\v{c}ki
  \institute{%
    Department of Biomedical Engineering,
    Technische Universiteit Eindhoven
  }
}
\begin{document}

\maketitle


\begin{abstract}
  \textbf{Abstract} Inspired by {RNA}-editing as occurs in
  transcriptional processes in the living cell, we introduce an
  abstract notion of string adjustment, called guided rewriting. This
  formalism allows simultaneously inserting and deleting elements.  We
  prove that guided rewriting preserves regularity: for every regular
  language its closure under guided rewriting is regular too. This
  contrasts an earlier abstraction of {RNA}-editing separating
  insertion and deletion for which it was proved that regularity is
  not preserved. The particular automaton construction here relies on
  an auxiliary notion of slice sequence which enables to sweep from
  left to right through a completed rewrite sequence.
\end{abstract}

\def\thefootnote{\fnsymbol{footnote}}
\footnotetext[1]{Corresponding author, \texttt{evink@win.tue.nl}}

\section{Introduction}

We study an elementary biologically inspired formalism of string
replacement referred to as guided rewriting. Given a fixed and finite
set~$G$ of strings, also called guides, a rewriting step amounts to
adapting a substring towards a guide. We consider two versions of
guided rewriting: \emph{guided insertion/deletion}, which is close
to an editing mechanism as encountered in the living cell, and general
guided rewriting based on an \emph{adjustment relation}, which is
mathematically more amenable. For guided insertion/deletion the guide
and the part of the string that is rewritten do not need to be of the
same length. They are required to be equal up to occurrences of a
distinguished dummy symbol.  For general guided rewriting the
correspondence of the guide and the substring that is rewritten is
element-wise. The guide and substring are equivalent symbol-by-symbol
according to a fixed equivalence relation called adjustment.

In both cases, for a finite set of guides~$G$, only a finite set of
strings can be obtained by repeatedly rewriting a given
string. Starting from a language~$L$, we may consider the
extension~$L_{i/d}$ of the language with all the rewrites obtained by
guided insertion/deletion and the extension~$L_G$ of the language
obtained by adding all the adjustment-based guided rewrites.  We
address the question if regularity of~$L$ implies regularity
of~$L_{i/d}$ and of~$L_G$.  The results of the paper state that in the
case of guided insertion/deletion regularity is preserved if the
strings of dummy symbols involved are bounded and that guided
rewriting based on adjustment always preserves regularity.

The motivation for this work stems from transcriptional biology. RNA
can be seen as strings over the alphabet $\{ C, G, A, U
\}$. Replication of the encoded information is one of the most
essential mechanisms in life: strands of RNA are faithfully copied by
the well-known processes of RNA-transcription. However, typical for
eukaryotic cells, the synthesis of RNA does not yield an exact copy of
part of the DNA, but a modification obtained by post-processing. The
class of the underlying adjustment mechanisms is collectively called
\emph{RNA-editing}.

Abstracting away from biological details, the
computational power of insertion-deletion systems for RNA-editing is
studied in~\cite{TY03}: an insertion step is the replacement of a
string~$u v$ by the string $u \alpha v$ taken from a particular finite
set of triples $u, \alpha, v$. Similarly, a deletion step replaces $u
\alpha v$ by~$u v$ for another finite set of triples $u, \alpha
,v$. In~\cite{MPRV05} the restriction is considered where $u$ and~$v$
are both empty.  The approach claims full computational power, that
is, they generate all recursively enumerable languages.

In the RNA-editing mechanisms occurring in nature, however, only very
limited instances of these formats apply. Often only the symbol~$U$ is
inserted and deleted, instead of arbitrary strings~$\alpha$, see
e.g.~\cite{ATS97}. Therefore, following~\cite{Zan10:lata}, we
investigate guided insertion/deletion focusing on the special role of
the distinguished symbol~$0$, the counterpart of the
RNA-base~$U$. However, in order to prove that under this scheme
regularity is preserved we extend our investigations to guided
rewriting based on an abstraction adjustment relation. In fact, we
prove the theorem for guided insertion/deletion by appealing to the
result for guided rewriting based on adjustment.

The proof of the latter result relies on reorganizing sequences of
guided rewrites into sequences of so-called slices. The point is that,
since guides may overlap, each guided rewrite step adds a `layer' on
top of the previous string. In this sense guided rewriting is
vertically oriented. E.g., Figure~\ref{fig-five-slices} in
Section~\ref{sectr} shows six rewrite steps of the string
$\mathit{ebcfa}$ yielding the string $\mathit{fbcfb}$ involving eight
layers in total. However, in reasoning about recognition by a finite
automaton a horizontal orientation is more natural. One would like to
sweep from left to right, so to speak. Again referring to
Figure~\ref{fig-five-slices}, five slices can be distinguished,
viz.\ a slice for each symbol of the string $\mathit{ebcfa}$. The
technical machinery developed in this paper allows for a transition
between the two orientations.



The organization of this paper is as follows. The biological
background of RNA-editing is provided in Section~\ref{mrnaediting}.
Section~\ref{sec-ins-del} presents the theorem on preservation of
regularity for guided insertion-deletion. The notion of guided
rewriting based on an adjustment relation is introduced in
Section~\ref{sec-guided-rewriting} and a corresponding theorem on
preservation of regularity is presented. To pave the way for its
proof, Section~\ref{sectr} introduces the notions of a rewrite
sequence and of a slice sequence and establishes their
relationship. Rewrite sequences record the subsequent guided rewrites
that take place, slice sequences represent the cumulative effect of
all rewrites at a particular position of the string being adjusted. In
Section~\ref{sec-preservation} we provide, given a finite automaton
accepting a language~$L$, the construction of an automaton for the
extended language~$L_G$ with respect to a set of guides~$G$.
Section~\ref{seccr} wraps up with related work and concluding remarks.

\blankline

\paragraph{Acknowledgment}
We acknowledge fruitful feedback from Peter van der Gulik and detailed
comment from the reviewers of the MeCBIC 2012 workshop.

\clearpage

\section{Biological motivation}
\label{mrnaediting}

This section provides a description of RNA editing from a biological
perspective. In this paper we focus on the insertion and deletion of
uracil in messenger RNA (mRNA) and provide abstractions of the
underlying mechanism in the sequel. However, in the living cell there
are different kinds of RNA editing that vary in the type of RNA that
is edited and the type of editing operations.  Uracil is represented
by the letter~$U$. The three other types of nucleotides for RNA, viz.\
adenine, guanine and cytosine are represented by the letters $A$, $G$
and~$C$, respectively.

$U$-insertion/deletion editing is widely studied in the mitochondrial
genes of kinetoplastid protozoa~\cite{StuartEtAl97}. Kinetoplastids
are single cell organisms that include parasites
like~\emph{Trypanosoma brucei} and~\emph{Cri\-thi\-dia fasciculata}, that
can cause serious diseases in humans and/or animals. Modifications of
kinetoplastid mRNA are usually made within the coding regions. These
are the parts that are translated into proteins, which are the
building blocks of the cells. This way coded information of the
original gene can be altered and therefore expressed, i.e.\ translated
into proteins, in a varying number of ways, depending on the
environment in the cell. This provides additional flexibility as well
as potential specialization of different parts of the organisms for
particular functions.

\blankline

\noindent
Here we describe a somewhat simplified version of the mechanism for
the insertion and deletion of~$U$. More details can be found, for
instance, in~\cite{StuartEtAl97,ATS97,Blum1990189,vdSpekEtAl91}.  For
simplicity we assume that only identical letters match with one
another. In reality, the matching is based on complementarity, usually
assuming the so-called Crick-Watson pairs: $A$~matches with~$U$ and
$G$~matches with~$C$.

In general, a single step in the editing of mRNA involves two strands
of RNA, a strand of messenger RNA and a strand of guide RNA, the
latter typically referred to as the guide. To explain the mechanism
for the insertion of uracil, let us consider an example. See
Figure~\ref{fig-example}. Assume that we start of with an mRNA
fragment: $u = N_1N_2N_3N_4N_5$ and the guide $g = N_2N_3UUUN_4$,
where $N_i$ can be an arbitrary nucleotide $A$, $G$ or~$C$, but not
$U$. Obviously, there is some match between $u$ and~$g$ involving the
letters $N_2$, $N_3$, and~$N_4$, which is partially `spoiled' by the
$UUU$ sequence. By pairing of letters we have that $g$ attaches
to~$u$; the matching substrings $N_2 \mkern1mu N_3$ and~$N_4$ serve 
as anchors.

\begin{figure}[hbtp]
  \centering
  \includegraphics[scale=0.65]{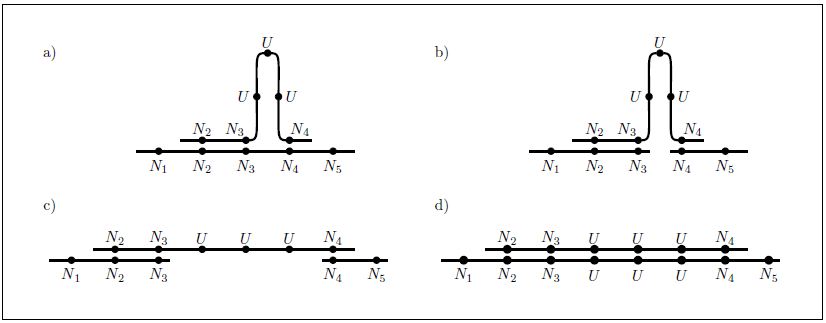}
  \caption{Various stages of guided $U$-insertion}
  \label{fig-example}
\end{figure}

By chemical reactions involving special enzymes $u$ is split open
between $N_3$ and~$N_4$.  The gap between the anchors is then filled
by the enzyme mechanism using the guide as a template. For each
letter~$U$ in the guide a~$U$ is added also in the gap.  As a result
the mRNA string~$u$ is transformed into $N_1N_2N_3UUUN_4N_5$. In
general, one can have more than two anchors (involving only non-$U$
letters) in which the guide and the mRNA strand match. In that case
mRNA is opened between each pair of anchors and all gaps between these
anchors are filled with~$U$ such that the number of $U$s in the guide
is matched.

A similar biochemical mechanism implements the deletion of~$U$s from a
strand of mRNA\@. We illustrate the deletion process on the following
example. Let us assume that we have the mRNA strand $u =
N_1N_2N_3UUN_4N_5$ and the guide $g = N_2N_3N_4$. Like in the
insertion case, $g$~initiates the editing by attaching itself to~$u$
at the matching positions $N_2$,$N_3$, and~$N_4$. Only now the
enzymatic complex removes the mismatching $UU$ substring between $N_3$
and~$N_4$ to ensure the perfect match between the substring and the
guide.  As a result the edited string $N_1N_2N_3N_4N_5$ is
obtained. In general, we can have several anchoring positions on the
same string. In that case, all~$U$s between each two matching
positions are removed from the mRNA\@.

A guide can also induce both insertions and deletions of~$U$
simultaneously. For instance the guide $N_2N_3UUUN_4$ can induce
editing in parallel of the string $N_1UN_2UN_3UN_4UN_5UN_6$ which
results in the string $N_1UN_2N_3UUUN_4UN_5UN_6$, where the~$U$
between $N_2$ and~$N_3$ has been deleted and two~$U$'s between $N_3$
and~$N_4$ have been inserted. This is done by the same biochemical
mechanisms that are involved in separate insertions and
deletions. Analogously as above, we can have multiple insertions and
deletions induced by the same guide on the original pre-edited
sequence.

The net effect of all three cases considered above is that a strand~$u
= xyz$, such that $y$ equals~$g$ up to occurrences of~$U$, is modified
by the insertion and deletion mechanism and becomes a string $v =
xgz$.  It is noteworthy that the rewriting system that we describe in
the sequel also applies to another case with the same effect.  For
example, consider a guide $g = N_2N_3UUUN_4$ and a pre-edited mRNA $u
= N_1N_2N_3UUN_4N_5N_6$.  Now, to obtain the match of the guide~$g$
and a substring~$y$ of~$u$, a~$U$ is inserted in~$u$, resulting in the
string $v = N_1N_2N_3UUUN_4N_5N_6$. If the $U$~subsequence in~$y$ was
longer though, like in the case for $u' = N_1N_2N_3UUUN_4N_5N_6$ and
$g' = N_2N_3UUN_4$, then we have that the extra~$U$ in~$u'$ is removed
resulting in $v' = N_1N_2N_3UUN_4N_5N_6$.

To summarize, the mRNA editing mechanism underlying
$U$-insertion/deletion can be interpreted as symbolic manipulations of
strings. In the sequel symbol~$U$ will be denoted by~$0$ and obviously
plays a special role. The crucial point is that in a single step some
substring~$y$ is replaced by a guide~$g$ for which $y$ and~$g$
coincide except for the symbol~$0$.


\section{Guided insertion\,/\,deletion}
\label{sec-ins-del}

Inspired by the biological scheme of editing of mRNA as discussed in
the previous section, we study the more abstract notion of guided
insertion and deletion and guided rewriting based on an adjustment
relation in the remainder of this paper. In this section we address
guided insertion and deletion, turning to guided rewriting in
Section~\ref{sec-guided-rewriting}.

More precisely, fix an alphabet~$\Sigmanod$ and distinguish $0 \notin
\Sigmanod$. Put $\Sigma = \Sigmanod \cup \singleton{0}$. Choose a
finite set $\G \subseteq \Sigmaast$, with elements~$g$ also referred
to as guides. Reflecting the biological mechanism, we assume that each
$g \in \G$ has at least two letters and that the first and last letter
of each $g \in \G$ are not equal to~$0$. Hence, $G \subseteq \Sigmanod
{\cdot} \Sigma^\ast {\cdot} \Sigmanod$. Now a guided
insertion/deletion step $\idr$ with respect to~$\G$ is given by
\begin{displaymath}
  {u \idr v} \iff 
  {{u = xyz} \: \land \: 
   {v = xgz} \: \land \: 
   {g \in \G} \: \land \: 
   {\pi(y) = \pi(g)}}
\end{displaymath}
where $y \in \Sigmanod {\cdot} \Sigma^\ast {\cdot} \Sigmanod$, and
$\pi(y)$ and~$\pi(g)$ are obtained from $y$ and~$g$, respectively, by
removing their~$0$s. Thus, $\pi : \Sigmaast \to \Sigmanodast$ is the
homomorphism such that $\pi(\varepsilon) = \varepsilon$, $\pi(0) =
\varepsilon$ and $\pi(a) = a$ for $a \in \Sigmanod$. So, intuitively,
$g$~is anchored on the substring~$y$ of~$u$ and sequences of~$0$s are
adjusted as prescribed by the guide~$g$, in effect replacing the
substring~$y$ by the guide~$g$ while maintaining the prefix~$x$ and
suffix~$z$.

As a simple example of a single guided insertion/deletion step, for
$\G = \singleton{g}$ with $g = bcb000ab0c$ and $u =
a00bc00babcc00a00b$ we have $u \idr v$ for $v =
a00bcb000ab0cc00a00b$. Here it holds that $u = a00 \cdot bc00babc
\cdot c00a00b$, $\pi ( bc00babc ) = bcbabc = \pi ( bcb000ab0c )$ and
$v = a00 \cdot bcb000ab0c \cdot c00a00b$. Note, for the string~$v$,
being the result of a rewrite with guide~$g$ itself with only one
possible anchoring, only trivial steps can be taken further. So, the
operation of guided insertion/deletion with the same guide~$g$ at the
same position in a string is idempotent. However, anchoring may
overlap. Consider the set of guides $\G = \lc aa0a ,\, a0aa \rc$, for
example. Then the string $aaa$ yields an infinite rewrite sequence
\begin{displaymath}
  aaa \idr aa \mkzero a \idr a \mkzero aa \idr aa \mkzero a \idr a
  \mkzero aa \cdots {}
\end{displaymath}
Still, from~$aaa$ only finitely many different rewrites can be
obtained by insertion/deletion steps guided by this~$\G$, viz.\ $\lc
aaa ,\, aa0a ,\, a0aa \rc$.

The restrictions put on~$G$ exclude arbitrary deletions (possible if
$\varepsilon$ would be allowed as guide) and infinite pumping (if
guides need not be delimited by symbols from~$\Sigmanod$). As an
illustration of the latter case, starting from the string $abc$ and
`guide' $0ab$, the infinite sequence $\mathit{abc} \idr \mkzero abc
\idr \mkzero \mkzero abc \idr \mkzero \mkzero \mkzero abc \ldots {}$
would be obtained. The restriction on the substring~$y$ prevents to
make changes outside the scope of the guide~$g$ and forbids $a \mkzero
b \mkzero \mkzero \mkzero c \idr ab \mkzero c$ by way of the guide
$ab$.

As a first observation we show that the set $L^u_{i/d} = \lc v \in
\Sigmaast \mid u \idrast v \rc$, for any finite set of guides~$\G$ and
any string~$u$, is finite. Write $u = a_0 0^{\mkern1mu i_1}
a_1 \ldots a_{n_1} 0^{\mkern1mu i_n} a_n$ where $a_i \in \Sigmanod$,
$i_k \geqslant 0$, for some $n \geqslant 0$. In effect, a guided
insertion/deletion step only modifies the substrings~$0^{\mkern1mu
  i_k}$ or leaves them as is. Therefore, after one or more guided
insertion/deletion steps the substrings~$0^{\mkern1mu i_k}$ are
strings taken from the set
\begin{displaymath}
  Z_{i/d}^{\mkern1mu u} =
  \lc 0^{\mkern1mu i_k} \mid 1 \leqslant k \leqslant n \rc \cup 
  \lc 0^{\mkern1mu \ell} \mid xa \cdot 0^{\mkern1mu \ell} bz \in
  \G ,\ a,b \in \Sigmanod ,\ \ell \geqslant 0 \rc
\end{displaymath}
Thus, if $u \idrast v$ then $v \in \hat{L}^u_{i/d}$, where
$\hat{L}^u_{i/d} = \lc a_0 \mkern1mu z_1 a_1 \ldots a_{n_1} \mkern1mu
z_n \mkern1mu a_n \mid z_k \in Z_{i/d}^{\mkern1mu u},\ 1 \leqslant k
\leqslant n \rc$, i.e.\ $L^u_{i/d} \subseteq \hat{L}^u_{i/d}$. Since
the set~$G$ is finite, it follows that $Z_{i/d}^{\mkern1mu u}$ is
finite, that $\hat{L}^u_{i/d}$ is finite and that $L^u_{i/d}$ is
finite as well.

More generally, given a set of guides~$G$, we define the extension by
insertion/deletion~$L_{i/d}$ of a language~$L$ over~$\Sigma$ by
putting $L_{i/d} = \lc v \in \Sigma^\ast \mid \exists u \in L \colon u
\idr^* v \rc$. Casted to the biological setting of
Section~\ref{mrnaediting}, $L$~are the strands of messenger RNA, $G$
are strands of guide RNA\@. Next, we consider the question whether
regularity of the language~$L$ is inherited by the induced
language~$L_{i/d}$. Note, despite the finiteness of the
insertion/deletion scheme for a single string, it is not obvious that
such would hold. 

For example, consider the language corresponding to the regular
expression $(ab)^\ast$ together with the operation $\sort$ which maps
a string~$w$ over the alphabet $\singleton{a,b}$ to the string
$a^{\mkern1mu n} b^{\mkern1mu m}$ where $n = \#_a(w)$, $m =
\#_b(w)$. Thus $\sort \mkern1mu (w)$ is a sorted version of~$w$ with
the $a$'s preceding the~$b$'s. Note, for $w \in (ab)^\ast$ there is
only one string $\sort \mkern1mu (w)$, as sorting is a deterministic,
hence finitary operation. However, despite $\L( \, (ab)^\ast \,)$, the
language associated by the regular expression, is regular, the
language
\begin{displaymath}
  \sort \mkern1mu ( \, (ab)^\ast \, ) = 
  \lc \sort \mkern1mu (w) \mid w \in (ab)^\ast \rc = 
  \lc a^{\mkern1mu n} b^{\mkern1mu n} \mid n \geqslant 0 \rc
\end{displaymath}
is \emph{not} regular. Also, if we define the rewrite operation $ba
\to_R ab$, then $\lc v \in \singleton{a,b}^\ast \mid u \to_R^\ast v
\rc$ contains shuffles of the string~$u$, i.e.\ all strings over
$\singleton{a,b}$ having the same number of $a$'s and $b$'s but are
smaller lexicographically. Thus, the set $\lc v \in
\singleton{a,b}^\ast \mid u \to_R^\ast v \rc$ is finite for each
string~$u$. However, the language $\hat{L} = \lc v \in
\singleton{a,b}^\ast \mid \exists u \in L \colon \ u \to_R^\ast v \rc$
cannot be regular: intersection with the language of $a^\ast b^\ast$
does not yield a regular language. More specifically, $\hat{L} \cap
\L( \, a^\ast b^\ast \,) = \lc a^{\mkern1mu n} b^{\mkern1mu n} \mid n
\geqslant 0 \rc$. We conclude that the question of $L_{i/d}$ being
regular, given regularity of the language~$L$, is not straightforward.

With the machinery of rewrite sequences and slice sequences developed
in the sequel of the paper, we will be able to prove the following for
guided insertion/deletion.

\blankline

\begin{theorem}
  \label{thmid}
  In the setting above, if~$L$ is a regular language and for some
  number~$k \geqslant 0$ it holds that no string of~$L$ or~$\G$
  contains $k$ (or more) consecutive~$0$'s, then the language
  $L_{i/d}$ is regular too.
\end{theorem}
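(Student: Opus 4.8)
The plan is to reduce Theorem~\ref{thmid} to the regularity-preservation theorem for guided rewriting based on adjustment announced in Section~\ref{sec-guided-rewriting}, by coding each bounded block of~$0$'s into a single letter so that every insertion/deletion step becomes a \emph{length-preserving} adjustment step. The first step I would carry out is to show that the bound on consecutive~$0$'s is invariant under rewriting: if no string of~$L$ or of~$\G$ contains $k$ consecutive~$0$'s, then neither does any string of~$L_{i/d}$. This uses only the anchoring convention that every guide, and every rewritten factor~$y$, starts and ends with a letter of~$\Sigmanod$. Indeed, in a step ${u = xyz} \idr {xgz = v}$ the first and last letters of~$g$ lie in~$\Sigmanod$, so no maximal block of~$0$'s of~$v$ can cross the junction between~$x$ and~$g$ or between~$g$ and~$z$; every such block therefore lies wholly inside~$x$, inside~$z$, or inside~$g$. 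In the first two cases it is already a maximal block of~$u$ (again because~$y$ is delimited by non-$0$ letters), in the third it is a block of~$g$; in all cases its length is below~$k$. Induction along $\idrast$ finishes the invariance.

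Next I would introduce the coding. Each bounded-run string has a unique factorisation $0^{c_0} b_1 0^{c_1} \cdots b_n 0^{c_n}$ with $b_i \in \Sigmanod$ and every $c_i < k$; let $\mathrm{enc}$ replace each maximal block $0^{c}$---including the empty blocks between adjacent non-$0$ letters---by a single fresh letter~$\theta_c$, yielding a string over $\hat{\Sigma} = \Sigmanod \cup \singleton{\theta_0, \ldots, \theta_{k-1}}$ in which block-letters and $\Sigmanod$-letters strictly alternate. Both $\mathrm{enc}$ and $\mathrm{enc}^{-1}$ are rational (finite-state) transductions, so regularity is preserved in both directions; in particular $\mathrm{enc}(L)$ is regular. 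On $\hat\Sigma$ take the adjustment equivalence~$\R$ that is the identity on~$\Sigmanod$ and makes all of $\theta_0, \ldots, \theta_{k-1}$ mutually equivalent, and take the guide set $\hat{\G} = \lc a_0 \theta_{j_1} a_1 \cdots a_{m-1} \theta_{j_m} a_m \mid a_0 0^{j_1} a_1 \cdots a_{m-1} 0^{j_m} a_m \in \G \rc$. Because $\pi(y) = \pi(g)$ forces~$y$ and~$g$ to share the \emph{same} non-$0$ letters $a_0, \ldots, a_m$, a step ${u} \idr {v}$ is mirrored exactly by an adjustment step that rewrites the factor $a_0 \theta_{i_1} a_1 \cdots a_m$ of $\mathrm{enc}(u)$ to the guide $a_0 \theta_{j_1} a_1 \cdots a_m$: the two factors have equal length $2m{+}1$ and agree with the guide position-wise under~$\R$, whereas the block-letters $\theta_d$ and $\theta_{d'}$ flanking the factor---which come from~$x$ and~$z$---lie strictly outside the rewritten region and are left untouched.

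For the converse direction I would note that, since~$\R$ keeps block-letters among themselves and fixes every $\Sigmanod$-letter, any adjustment step between canonical strings is forced to rewrite exactly such an anchored factor and hence, after decoding, realises a genuine $\idr$-step. Thus $\idr$ and the adjustment step are in step-for-step correspondence along~$\mathrm{enc}$, and this lifts to $\idrast$, so $\mathrm{enc}(L_{i/d})$ coincides with the extension of $\mathrm{enc}(L)$ by guided rewriting based on adjustment for~$\hat\G$ and~$\R$. The latter is regular by the adjustment theorem, and applying the rational transduction $\mathrm{enc}^{-1}$---legitimate since, by the invariance proved first, $L_{i/d}$ contains only bounded-run strings on which $\mathrm{enc}$ is a bijection---shows $L_{i/d}$ regular. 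I expect the main obstacle to be the faithfulness of the coding in the second step: one must check that both anchor letters $a_0$ and~$a_m$ sit \emph{inside} the rewritten factor while the $0$-blocks of~$x$ and~$z$ adjacent to them sit \emph{outside} it, so that variable-length editing of the interior $0$-blocks is recast without loss as fixed-length, position-wise adjustment. This is exactly where the delimiting of guides by $\Sigmanod$-letters is indispensable, and the bound~$k$ is what keeps the block-alphabet $\singleton{\theta_0, \ldots, \theta_{k-1}}$ finite.
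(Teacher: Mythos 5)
Your proposal is correct and follows essentially the same route as the paper: compress each bounded maximal block $0^{\mkern1mu i}$ into a single fresh symbol, observe that this coding turns every $\idr$-step into a length-preserving adjustment-based guided rewrite step (and conversely), apply Theorem~\ref{thmmain} to the coded language and guides, and decode, with the regularity-preservation of the coding handled by closure properties (the paper's Lemma~\ref{lemid}). The only difference is cosmetic: you make explicit the invariance of the bound~$k$ under $\idrast$, which the paper leaves implicit in its claim that compression restricts to a bijection on the relevant strings.
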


\blankline




\noindent
We will prove Theorem~\ref{thmid} by applying a more general result on
guided rewriting, viz.\ Theorem~\ref{thmmain} formulated in the next
section and ultimately proven in Section~\ref{sec-preservation}. As in
the notion of guided rewriting as developed in the sequel, symbols are
only replaced by single symbols by which lengths of strings are always
preserved, a transformation is required to be able to apply
Theorem~\ref{thmmain}.

Before doing so we relate our results to those of
\cite{Zan10:lata}. There a relation similar to $\idr$ was introduced,
with the only difference that in a single step either $0$'s are
deleted or inserted, but not simultaneously. One of the conclusions
of~\cite{Zan10:lata} is that in that setting regularity is {\em not}
preserved, so the opposite of the main result in the present setting.

\section{Guided rewriting}
\label{sec-guided-rewriting}

The idea of guided rewriting is that symbols are replaced by
equivalent symbols with respect to some {\em adjustment
  relation}~$\R$.  The one-one correspondence of the symbols of the
string~$u$ and its guided rewrite~$v$, enjoyed by this notion of
reduction, will turn out technically convenient in the sequel.


Let $\Sigma$ be a finite alphabet and~$\R$ an equivalence relation
on~$\Sigma$, called the {\em adjustment relation}. If $a \mbR b$ we
say that $a$ can be adjusted to~$b$. For a string~$u \in \Sigmaast$ we
write $\#u$ for its length, use $u[i]$ to denote its $i$-th element,
$i = 1,\ldots,\# u$, and let $u[p,q]$ stand for the substring $u[p]
\, u[p{+}1] \cdots u[q]$. The relation~$\R$ is lifted to
$\Sigmaast$ by putting
\begin{displaymath}  
  u \R v \textit{\quad iff \quad} \#u = \#v \: \land \: \forall i = 1, \ldots,
  \#u \colon u[i] \R v[i]
\end{displaymath}

\noindent
Next we define a notion of guided rewriting that involves an
adjustment relation.

\begin{definition}
  \label{df-guided-rewriting}
  We fix a finite subset $\G \subseteq \Sigmaast \!$, called the set of
  guides.
  \begin{itemize}
  \item [(a)] For $u,v \in \Sigmaast \!$, $g \in \G$, $p \geqslant 0$, we
    define $u \togp v$, stating that $v$~is the rewrite of~$u$ with
    guide~$g$ at position~$p$, by
    \begin{displaymath}
      u \togp v \quad \mathit{iff} \quad \exists x, y, z \in
      \Sigmaast \colon u = x \, y \, z 
      \ \land \ \#x = p \ \land \ y \R g \ \land \  v = x \, g \, z
    \end{displaymath}
  \item [(b)] We write $u \toG v$ if $u \togp v$ for some $g \in \G$
    and $p \geqslant 0$. We use $\toGast$ to denote the reflexive
    transitive closure of~$\toG$. A sequence $u_1 \toG u_2 \toG \cdots
    \toG u_n$ is called a reduction.
  \item [(c)] For a language~$L$ over $\Sigma$ and a set of
    guides~$\G$ we write
    \begin{displaymath}
      L_\G = \lc v \in \Sigma^\ast \mid \exists u \in L \colon u
      \toGast v \rc
    \end{displaymath}
  \end{itemize}
\end{definition}

\noindent
So, a $\toG$-step adjusts a substring to a guide in~$\G$ element-wise,
and $L_\G$ consists of all strings that can be obtained from a string
from $L$ by any number of such adjustments.  For example, if $\Sigma =
\lbrace a, b, c \rbrace$, $\G = \{bb\}$ and $a \R b$ but not $a \R c$,
then by a $\toG$-step two consecutive symbols not equal to~$c$ are
replaced by two consecutive~$b$'s. In particular, $aaacaa \to_{bb,1}
abbcaa$ and $abbcaa  \to_{bb,0} bbbcaa$. We have

\begin{displaymath}
  \{aaacaa\}_\G = \lc  aaacaa ,\, bbacaa,\, abbcaa,\, aaacbb,\,
  bbbcaa,\, abbcbb,\, bbacbb, bbbcbb \rc
\end{displaymath}

\blankline

\noindent
Next, we state the main result of this paper regarding guided
rewriting as given by Definition~\ref{df-guided-rewriting}.

\begin{theorem}
  \label{thmmain}
  Given an equivalence relation~$\R$ on~$\Sigma$, let $G$ be a finite
  set of guides. Suppose $L$ is a regular language. Then $L_\G$ is
  regular too.
\end{theorem}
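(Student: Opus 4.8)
The plan is to reduce the statement to the regularity of a single length-preserving relation and then to verify that relation by a left-to-right automaton built from a reorganization of rewrite sequences into \emph{slices}.

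First I would record two elementary facts. Because $y \mbR g$ forces $\size{y}=\size{g}$, every $\toG$-step preserves length, and a routine induction gives the invariant that $u \toGast w$ implies $w \mbR u$ symbol-by-symbol. Since $\R$ is an equivalence relation, it follows by symmetry and transitivity that a step $w \togp w'$ is applicable --- i.e.\ $w[p{+}1,p{+}\size{g}] \mbR g$ --- exactly when $u[p{+}1,p{+}\size{g}] \mbR g$; that is, \emph{applicability of any guide at any position depends only on the original string $u$, not on the intermediate string $w$}. This decouples which guide applications are ever available from the order in which they occur, so that a reachable $v$ is determined purely by deciding, for each position, which available guide application is the \emph{last} one to write it.

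Next I would encode equal-length pairs as words over $\Sigma \times \Sigma$ and put $\hat R = \lc \langle u,v\rangle \mid u \toGast v \rc$. With $\pi_1,\pi_2$ the two projection homomorphisms one has $\LG = \pi_2\bigl(\pi_1^{-1}(L) \cap \hat R\bigr)$, so by closure of the regular languages under inverse homomorphism, intersection and homomorphic image it suffices to show $\hat R$ regular. To do this I would build an automaton that reads $\langle u,v\rangle$ from left to right and accepts iff $u \toGast v$. The device is the slice reorganization announced in the introduction: instead of recording the global temporal list of rewrites (the `vertical' layers), record for each position $i$ the sequence of values it passes through, from $u[i]$ to $v[i]$, tagged by the guide responsible for each change. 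A guide of length $\size{g}$ anchored at $p$ contributes one change to each of the columns $p{+}1,\ldots,p{+}\size{g}$ at a single moment, so adjacent columns must agree, in both identity and relative temporal order, on every guide that spans their common boundary. The claim underpinning the construction is that $u \toGast v$ holds iff there is such a family of columns that is locally consistent across every boundary; the automaton then guesses the columns on the fly and its finite state needs only to remember the commitments crossing the current boundary, of which there are boundedly many since every guide has length at most $M = \max_{g \in \G}\size{g}$.

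The main obstacle is exactly this transition from the vertical to the horizontal orientation: I must prove that global realizability of a slice family --- the existence of one linear order of rewrite steps inducing, column by column, the recorded last-writer pattern with no cyclic precedence between overlapping guide applications --- is equivalent to a condition that is checkable boundary-by-boundary with finitely many states. Establishing that the ordering constraints are effectively local, so that no long-range obstruction can arise that the bounded window fails to detect, is the technical heart of the argument, and is where the precise definition of slice sequence and the correspondence between rewrite sequences and slice sequences (to be developed before the construction) must be made to carry the weight.
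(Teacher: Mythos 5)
Your strategy is in essence the paper's own: exploit the fact that $\toG$-steps preserve length, reorganize the vertical stack of rewrites into per-position columns (the paper's slices), characterize reachability by a compatibility condition between adjacent columns, and recognize the result with a left-to-right automaton whose state remembers the column at the current boundary. The product-alphabet detour through $\hat R$ and the projections is a harmless packaging difference; the paper instead pairs the DFA state for $L$ directly with a slice. Your opening observation that applicability of a guide at a position depends only on the original string $u$ (because $u \toGast w$ forces $w \mbR u$, using reflexivity, symmetry and transitivity of~$\R$) is correct and is exactly what allows slices and rewrite sequences to be defined relative to $u$ alone.

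There are, however, two genuine gaps. First, your finiteness argument for the state space fails as stated: the bound $M = \max_{g\in\G}\size{g}$ limits the horizontal extent of each guide application, but nothing limits how many applications are stacked on top of one another at a single position, so a column can be arbitrarily long and the ``commitments crossing the current boundary'' are not a priori bounded. The missing idea is that a column never needs to contain the same guide-offset pair $(g,q)$ twice: if it does, the entire lower occurrence of that guide can be deleted from all $\size{g}$ consecutive columns it spans while still satisfying the boundary conditions and without disturbing the topmost, yield-determining entry of any column; hence one may restrict to repetition-free columns, of which there are only finitely many. Second, the equivalence you yourself flag as the technical heart --- that boundary-by-boundary consistency of columns implies the existence of an acyclic global schedule of rewrites realizing the recorded last-writer pattern --- is precisely the nontrivial half of the correspondence, and you leave it unestablished. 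The paper settles it by ordering the guide occurrences recoverable from a slice sequence (its chunks), showing that the induced relation on occurrences is a partial order (no cycles can arise because the witnessing index chains only move upward within each column), and then peeling off a maximal occurrence by induction on the number of chunks. Without some such argument the automaton is not justified: checking only adjacent-column compatibility could in principle accept a pair $\langle u,v\rangle$ with $v$ not actually reachable from~$u$.
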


\noindent
Before going to the proof, we first show that both finiteness of~$G$
and the requirement of~$\R$ being an equivalence relation are
essential. Below, for a regular expression~$r$ we write $\L(r)$ for
its corresponding language.

To see that finiteness of~$\G$ is essential for Theorem~\ref{thmmain}
to hold, let $G = \lc c \, a^{\mkern2mu k} c \, b^{\mkern2mu k} c \mid
k \geqslant 0 \rc$ and $L = \L(c \, a^*c \, a^* c)$. Let $\R$ satisfy
$a \R b$ but not $a \R c$.  Then all elements of $L$ on which an
adjustment is applicable are of the shape $ca^{\mkern2mu
  k}ca^{\mkern2mu k}c$, where the result of the adjustment is
$ca^{\mkern2mu k}cb^{\mkern2mu k}c$, which can not be changed by any
further adjustment. So
\begin{displaymath}
  L_\G \; \cap \; \L(c \, a^* c \, b^* c) 
  \; = \; 
  \lc c \, a^{\mkern1mu k} c \mkern1mu\,  b^{\mkern2mu k} c \mid k
  \geqslant 0 \rc 
\end{displaymath}
is not regular. Since regularity is closed under intersection we
conclude that $L_G$ cannot be regular itself.

Also equivalence properties of~$\R$ are essential for Theorem
\ref{thmmain}. For $\G = \lc ab \rc$ and ${\R} = \lc (a,b), (b,a)
\rc$ the only possible $\toG$-steps are replacing the pattern~$ba$
by~$ab$. Note that here $\R$ is neither reflexive nor transitive.
Since $b a$ may be replaced by~$a b$, bubble sort on $a$'s and $b
\mkern1mu$'s can be mimicked by $\toGast$, while on the other hand
$\toGast$ preserves both the number of~$a$'s and the number
of~$b$'s. Hence
\begin{displaymath}
  \L((a b)^*)_\G \; \cap \; \L(a^* b^*) \; = \; \lc a^{\mkern2mu k}
  b^{\mkern2mu k} \mid k \geqslant 0 \rc
\end{displaymath}
which proves that $\L((ab)^*)_\G$ is not regular, again since
regularity is closed under intersection.


\section{Rewrite sequences and slice sequences}
\label{sectr}

Fix an alphabet~$\Sigma$, an adjustment relation~$\R$, and a set of
guides~$\G$.

\begin{definition}
  A sequence $\varrho = ( g_k, p_k )^r_{k=1}$ of guide-position pairs
  is called a guided rewrite sequence for a string~$u \in \Sigmaast$
  if it holds that (i)~$g_k \in \G$, (ii)~$0 \leqslant p_k \leqslant
  \#u - \#g_k$, and (iii)~$u[ p_k{+}1, p_k{+}\#g_k ] \mbR g_k$, for
  all~$k = 1, \ldots, r$.
\end{definition}

\noindent
A guide-position pair~$(g,p)$ indicates a redex for a guided rewrite
with~$g$ of the string~$u$. The position~$p$ is relative to~$u$. For
the rewrite to fit we must have $p + \#g \leqslant \#u$. The first
$p$~symbols of~$u$, i.e.\ the substring $u[1,p]$, are not affected by
the rewrite, as are the last $\#u-p+\#g$ symbols of~$u$, i.e.\ the
substring $u[ p{+}\#g{+}1 , \#u ]$.

The sequence~$\varrho$ induces a sequence of strings~$( u_k )^r_{k=0}$
by putting $u_0 = u$ and $u_{k}$ such that $u_{k{-}1} \togkpk u_{k}$
for $k = 1, \ldots, r$. To conclude that $u_{k{-}1} \togkpk u_{k}$ is
indeed a proper guided rewrite step, in particular that we have
$u_{k-1} [p_k{+}1, p_k{+}\#g_k ]$, we use the assumption $u[
  p_k{+}1,p_k{+}\#g_k ] \mbR g_k$ and the fact that if $u \togp v$
then $u [p+1,p+\#g] \mbR v [ p+1, p+\#g ]$. So we obtain $u \toGast
u_r$ by construction.  The string~$u_r$ is referred to as the yield
of~$\varrho$ for~$u$, notation $\yield(\varrho)$.  Conversely, every
specific reduction from~$u$ to~$v$ gives rise to a corresponding
guided rewrite sequence for~$u$.


\begin{definition}
  Let $a \in \Sigma$.  A sequence $\slice = ( g_i , q_i )_{i \in I}$
  of guide-offset pairs, for $I \subseteq \mathbb{N}$ a finite index
  set, is called a slice for~$a$ and~$\G$ if it holds that (i)~$g_i \in
  \G$, (ii)~$1 \leqslant q_i \leqslant \#g_i$, and~(iii) $a \mbR
  g_i[q_i]$, for all $i \in I$. The slice~$\slice$ is called a slice
  for a string~$u \in \Sigmaast$ at position~$n$, $1 \leqslant n
  \leqslant \#u$, if it is a slice of~$u[n]$.
\end{definition}

\noindent
Note that in a guide-offset pair~$(g,q)$ of a slice sequence, the
offset~$q$ is relative to the guide~$g$. Since we require $1 \leqslant
q \leqslant \#g$ for such a pair, the symbol~$g[q]$ is
well-defined. We will reserve the use of~$q$ for offsets, indices
within a guide, and the use of~$p$ for positions after which a rewrite
may take place, i.e.\ for lengths of proper prefixes of a given
string.

The goal of the notion of slice is to summarize the effect of a number
of guided rewrites local to a specific position within a string. The
symbol generated by the last rewrite that affected the position,
i.e.\ the particular symbol of the last element of the slice sequence,
is part of the overall outcome of the total rewrite. This symbol is
called the {\em yield} of the slice. More precisely, if $I \neq
\emptyset$, the yield of a slice~$\slice$ for a symbol~$a$ is defined
as $\yield( \slice ) = g_{i_{\max}}[ q_{i_{\max}} ]$ where $i_{\max} =
\max(I)$. In case $I = \emptyset$, we put $\yield( \slice ) =
a$. Occasionally we write $a \mbR \slice$, as for a slice~$\slice$ for
a symbol~$a$ it always holds that $a \mbR \yield( \slice )$.

A slice~$\slice$ is said to be repetition-free if $g_i = g_j \land
q_i = q_j$ implies $i = j$. If we have $I = \emptyset$, the
slice~$\slice$ is called the empty slice.

\blankline

\noindent
Next we consider sequences of slices, and investigate the relationship
between slices on two consecutive positions in a guided rewrite
sequence.

\begin{definition}
  A sequence $\sigma = ( \slice_n )^{\#u}_{n=1}$ is called a slice
  sequence for a string~$u$ if the following holds:
  \begin{itemize}
  \item $\slice_n$ is a slice for~$u$ at position~$n$, for $n = 1 ,
    \ldots , \#u$;
  \item for $n = 1, \ldots, \#u{-}1$, putting $\slice_n = ( g_i ,
    q_i )_{i \in I}$ and $\slice_{n+1} = (g'_i , q'_i \mkern1mu )_{i \in J}$,
    there exists a monotone partial injection $\gamma_n : I \to J$
    such that, for all $i \in I$ and $j \in J$,
    \begin{itemize}
    \item $i \notin \dom(\gamma_n) \implies \mkern2mu q_i = \#g_i$
    \item \hspace*{0.1cm} $\gamma_n (i) = j \iff g_i = g'_j \land
      q_i+1 = q'_j$
    \item $j \notin \rng(\gamma_n) \mkern4mu \implies q'_j = 1$
    \end{itemize}
  \item the slices $\slice_1$ and $\slice_{\#u}$, say $\slice_1 = (
    g_i , q_i )_{i \in I}$ and $\slice_{\#u} = ( g'_j , q'_j )_{j \in
      J}$, satisfy $q_i = 1$, for all $i \in I$, and $q'_j = \#g'_j$,
    for all $j \in J$, respectively.
  \end{itemize}
\end{definition}

\noindent
For the slices $\slice_n$ and~$\slice_{n+1}$ the mapping $\gamma_n
\colon I \to J$ is called the cut for $\slice_n$
and~$\slice_{n{+}1}$. It witnesses that $\slice_n$
and~$\slice_{n{+}1}$ match in the sense that a rewrite may end at
position~$n$, may continue for its next offset at position~$n{+}1$,
and may start at position~$n{+}1$.  Since a cut~$\gamma$ is an
order-preserving bijection from $\dom( \gamma )$ to~$\rng( \gamma )$,
and $\dom( \gamma )$ and~$\rng( \gamma )$ are finite, it follows that
for two slices $\slice, \slice'$ the cut $\slice \to \slice'$ is
unique. We write $\slice \leadsto \slice'$.  A slice $\slice = ( g_i,
q_i )_{i \in I}$ is called a start slice if $q_i = 1$ for all~$i \in
I$. Similarly, $\slice$ is called an end slice if $q_i = \#g_i$ for
all~$i \in I$. A start slice is generally associated with the first
position of the string that is rewritten, an end slice with the last
position. Note, a start slice as well as an end slice are allowed to
be empty. The yield of the slice sequence~$\sigma$ is the sequence of
the yield of its slices, i.e.\ we define $\yield(\sigma) = \yield(
\slice_1 ) \cdots \yield( \slice_{\#u} )$.

\blankline

\begin{example}
  \label{ex-slice-sequence}
Let~$\R$ be the adjustment relation with equivalence classes
$\{a,b\},\{c,d\},\{e,f\}$ and let the set of guides~$G$ be given by
$\G = \lc g_1, \, g_2, \, g_3 \rc$ where $g_1 = \mathit{fb}$, $g_2 =
\mathit{ace}$ and $g_3 = d$. For the string $u = \mathit{ebcfa}$
we consider the guided rewrite sequence $\varrho = (\, (g_3,2)$,
$(g_1,0)$, $(g_2,1)$, $(g_1,0)$, $(g_1,3)$, $(g_1,3) \,)$.  The
associated reduction looks like
\begin{equation}
  \mathit{ebcfa} \; \toG_{g_3,2} \;
  \mathit{ebdfa} \; \toG_{g_1,0} \;
  \mathit{fbdfa} \; \toG_{g_2,1} \;
  \mathit{facea} \; \toG_{g_1,0} \;
  \mathit{fbcea} \; \toG_{g_1,3} \;
  \mathit{fbcfb} \; \toG_{g_1,3} \;
  \mathit{fbcfb}
  \label{eq-example-reduction}
\end{equation}
Recording what happens at all of the five positions of the string~$u$
yields, for this example, the slice sequence $\sigma = ( \slice_n
)^5_{n=1}$ given in the table at the left-hand side of
Figure~\ref{fig-five-slices}, where the slice sequence is visualized too.

\begin{figure}[htb]
  \centering
  \raisebox{2.75cm}{%
  \scalebox{0.85}{%
  \begin{math}
    \def\arraystretch{1.2}
    \begin{array}{|@{\ }c@{\ }|@{\ }c@{\ }|@{\quad}l@{\quad}|}
      \hline
      & I_n & (g_i,q_i)_{i \in I_n}  \\
      \hline
      \slice_1 & 2,4 & 
        2 \mapsto (g_1,1) ,\  
        4 \mapsto (g_1,1) \\
      \slice_2 & \ 2,3,4\ & 
        2 \mapsto (g_1,2),\ 
        3 \mapsto (g_2,1),\ 
        4 \mapsto (g_1,2) \\
      \slice_3 & 1,3  & 
        1 \mapsto (g_3,1) ,\ 
        3 \mapsto (g_2,2) \\
      \slice_4 & \ 3,5,6\ & 
        3 \mapsto (g_2,3) ,\
        5 \mapsto (g_1,1) ,\ 
        6 \mapsto (g_1,1) \\
      \slice_5 & 5,6 & 
        5 \mapsto (g_1,2) ,\ 
        6 \mapsto (g_1,2) \\
      \hline
    \end{array}
    \def\arraystretch{1.0}
  \end{math}
  } 
  } 
  \quad
  \qquad
  \includegraphics[scale=0.525]{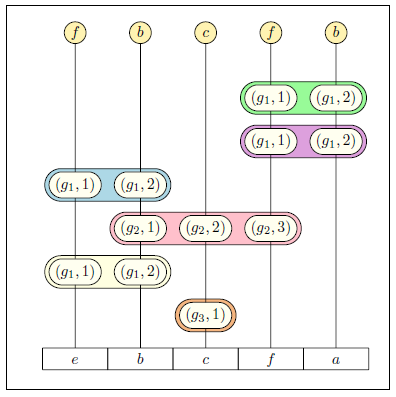}
  \caption{An example slice sequence}
  \label{fig-five-slices}
\end{figure}

For the choice of $I_1, \ldots, I_5$, the monotone partial injection
$\gamma_n$, $n = 1 \ldots 4$, maps every number to itself. It is
easily checked that all requirements of a slice sequence hold. The
ovals covering guide-offset pairs reflect the cuts as mappings between
to adjacent slices. However, they also comprise, in this situation
derived from a guided rewrite sequence, complete guides.  Note,
$\slice_1$ is a start slice, $\slice_5$~is an end slice. We have for
the slice sequence $\sigma = ( \slice_n )_{i=1}^5$ that $\yield(
\sigma ) = \yield( \slice_1 ) \cdot \cdots \cdot \yield( \slice_5) =
\mathit{fbcfb}$. Indeed, this coincides with the yield of the guided
rewrite sequence~$\varrho$ of~(\ref{eq-example-reduction}).
\end{example}

\noindent
The rest of this section is devoted to proving that the above
holds in general: Given a string and a set of guides, for every guided
rewrite sequence there exists a slice sequence and for every slice
sequence there exists a guided rewrite sequence. Moreover, the yield
of the guided rewrite sequence and slice sequence are the same.

\begin{theorem}
  \label{theorem-from-rewrites-to-slices}
  Let $\varrho = ( g_k , p_k )^r_{k = 1}$ be a guided rewrite
  sequence for a string~$u$. Then there exists a slice sequence
  $\sigma = ( \slice_n )^{\#u}_{n=1}$ for~$u$ such that $\yield(
  \sigma ) = \yield( \varrho )$.
\end{theorem}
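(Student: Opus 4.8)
The plan is to record, for each position~$n$ of the string, exactly those rewrite steps whose redex covers that position. Concretely, for $1 \leqslant n \leqslant \#u$ I would set
\[
  \slice_n = ( g_k , n - p_k )_{k \in K_n} ,
  \qquad
  K_n = \lc k \mid 1 \leqslant k \leqslant r ,\ p_k < n \leqslant p_k + \#g_k \rc ,
\]
so the index set of $\slice_n$ is the set of step numbers~$k$ for which the $k$-th rewrite acts on position~$n$, and the recorded offset $n - p_k$ locates~$n$ inside the guide~$g_k$. The point of indexing slices by the step number is that the temporal order of the rewrites is preserved as the numerical order of the slice indices, which is what will let me read off the yield.

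I would first check that each $\slice_n$ is a slice for~$u$ at position~$n$. Conditions~(i) and~(ii) are immediate: $g_k \in \G$ since~$\varrho$ is a rewrite sequence, and $k \in K_n$ is by definition the inequality $1 \leqslant n - p_k \leqslant \#g_k$. Condition~(iii), $u[n] \mbR g_k[n-p_k]$, is read off from requirement~(iii) of a guided rewrite sequence, which gives $u[p_k{+}1,p_k{+}\#g_k] \mbR g_k$ and hence the symbolwise relation $u[n] \mbR g_k[n - p_k]$ for every $k \in K_n$. For the cuts I would take $\gamma_n$ to be the identity on $K_n \cap K_{n+1}$, undefined elsewhere; this is at once a monotone partial injection. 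A step $k \in K_n$ drops out at position $n{+}1$ exactly when $n = p_k + \#g_k$, i.e.\ when its offset equals $\#g_k$, giving the first bullet; symmetrically a step in $K_{n+1} \setminus K_n$ must start at $n{+}1$ and so has offset~$1$, giving the third; and on the overlap the guide is literally unchanged while the offsets at $n$ and $n{+}1$ differ by one, matching the middle bullet. The start- and end-slice requirements are the boundary instances of the same computation.

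The substance of the proof is the yield equality, which I would prove pointwise: $\yield(\slice_n) = u_r[n]$ for every~$n$, where $u_r = \yield(\varrho)$ and lengths are preserved throughout the reduction. Using that a single step $u_{k{-}1} \togkpk u_k$ overwrites precisely the positions $p_k{+}1, \ldots, p_k{+}\#g_k$ by $g_k[1], \ldots, g_k[\#g_k]$ and fixes all others, I would track a single position~$n$ across the whole reduction and conclude that $u_r[n]$ is the symbol written by the last step covering~$n$, and is $u[n]$ when no step covers~$n$. That last step is the one of maximal index in~$K_n$, namely $k^\ast = \max(K_n)$, which writes $g_{k^\ast}[n - p_{k^\ast}]$; by the definition of the yield of a slice, with the empty-slice convention handling $K_n = \emptyset$, this is exactly $\yield(\slice_n)$. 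Concatenating over $n = 1, \ldots, \#u$ gives $\yield(\sigma) = u_r = \yield(\varrho)$.

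The step I expect to need the most care is not the yield but the middle cut condition, which is phrased as a biconditional. When two distinct steps share the same guide at the same position --- as with the two trailing $(g_1,3)$ rewrites of Example~\ref{ex-slice-sequence} --- several index pairs satisfy its right-hand side, so $\gamma_n$ cannot be pinned down by the $(g,q)$-data alone; the correct reading is that $\gamma_n$ is the order-preserving bijection between the overlapping index sets $K_n \cap K_{n+1}$, which is the unique cut between the two slices. One checks that the domain and range singled out by the $(g,q)$-matching are both exactly $K_n \cap K_{n+1}$, so this order-preserving matching is well-defined. Once this bookkeeping is settled the construction is forced and the remaining verifications are routine.
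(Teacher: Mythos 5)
Your proof is correct and is essentially the construction the paper itself uses: the paper builds $\sigma$ by induction on~$r$, appending the pairs $(g_r, n{-}p_r)$ to positions $p_r{+}1, \ldots, p_r{+}\#g_r$ at each step, and unrolling that induction yields exactly your explicit slices indexed by~$K_n$, with the identity cuts on $K_n \cap K_{n+1}$ and the same (here made pointwise) yield computation, the only simplification being that slice condition~(iii) follows directly from condition~(iii) of the rewrite sequence since both are stated relative to~$u$. Your remark that the middle cut condition cannot be read as a literal biconditional when a guide is applied twice at the same position is well taken --- the two trailing $(g_1,3)$ steps of Example~\ref{ex-slice-sequence} already violate that literal reading --- and your resolution, taking the unique order-preserving bijection on $K_n \cap K_{n+1}$, agrees with the paper's own stipulation that a cut is the unique order-preserving bijection between its domain and range.
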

\begin{proof}[Proof sketch]
  Induction on~$r$. If $\varrho$ is the empty rewrite sequence, we
  take for~$\sigma$ the slice sequence of $n$~empty slices. Suppose
  $\varrho$ is non-empty. Let $( u_k )_{k=0}^r$ be the sequence of
  strings induced by~$\varrho$. By induction hypothesis there exists a
  slice sequence~$\sigma'$ for the first $r{-}1$ steps
  of~$\varrho$. Suppose $u_{r{-}1} \togrpr u_r$. The slice
  sequence~$\sigma$ is obtained by extending the slices of~$\sigma'$
  from position $p_r{+}1$ to~$p_r{+}\#g_r$ with the pairs
  $(g_r,n{-}p_r)$. Then,
  \begin{displaymath}
    \def\arraystretch{1.2}
    \begin{array}{@{}lcl}
    \yield( \sigma ) & = &
    \yield( \sigma' [1,p_r] ) \cdot g_r[1,\#g_r] \cdot \yield(
    \sigma' [ p_r {+} \#g_r {+} 1, \#u] ) \\
    & = & 
    u_{r-1}[1,p_r] \cdot g_r \cdot u_{r-1}[ p_r {+} \#g_r
        {+} 1, \#u_{r-1} ]
    \: = \: u_r \: = \: \yield(\varrho)
    \end{array}
    \def\arraystretch{1.0}
  \end{displaymath}
  Verification of~$\sigma$ being a
  slice sequence for~$u$ requires transitivity of~$\R$.
\end{proof}

\noindent
In order to show the reverse of
Theorem~\ref{theorem-from-rewrites-to-slices} we proceed in a number of
stages. First we need to relate individual guide-offset pairs in
neighboring slices. For this purpose we introduce the
ordering~$\preccurlyeq$ on so-called chunks.

\begin{definition}
  \label{def-ordering}
  Let $\sigma = ( \slice_n )^{\#u}_{n=1}$ be a slice sequence
  for~$u$. Assume we have $\slice_n = ( \, g_{n,i} , q_{n,i} \, )_{i
    \in I_n}$, for $n = 1, \ldots, \#u$. Let $\gamma_n \colon I_n \to
  I_{n+1}$ be the cut for $\slice_n$ and~$\slice_{n+1}$, $1 \leqslant
  n < \#u$. Let $\X = \lc ( \, g_{n,i} ,q_{n,i} , i , n \, ) \mid 1
  \leqslant n \leqslant \#u, \ i \in I_n \rc$ be the set of chunks
  of~$\sigma$ and define the ordering~$\preccurlyeq$ on~$\X$ by
  putting $( \, g , q , i , n \, ) \preccurlyeq ( \, g' , {q
    \mkern1mu}', {i \mkern1mu}', n'
  \, )$ iff
  \begin{itemize}
  \item either $n' \geqslant n$ and there exist indexes $\ell_0, h_0,
    \ldots, \ell_{n'-n}, h_{n'-n}$ such that
    \begin{itemize}
    \item $\ell_k , h_k \in I_{n+k}$ and $\ell_k \leqslant h_k$, $0
      \leqslant k \leqslant n' - n$
    \item $h_k \in \dom( \gamma_{n+k} )$ and $\gamma_{n+k}( h_k ) =
      \ell_{k+1}$, $0 \leqslant k < n' - n$
    \item $\ell_0 = i$ and $h_{n'-n} = {i \mkern1mu}'$
    \end{itemize}
  \item or $n' \leqslant n$ and there exist indexes $\ell_0, h_0,
    \ldots, \ell_{n-n'}, h_{n-n'}$ such that
    \begin{itemize}
    \item $\ell_k , h_k \in I_{n'+k}$ and $\ell_k \leqslant h_k$, $0
      \leqslant k \leqslant n-n'$
    \item $\ell_k \in \dom( \gamma_{n'+k} )$ and $\gamma_{n'+k} (
      \ell_k ) = h_{k+1}$, $0 \leqslant k < n - n'$
    \item $h_0 = {i \mkern1mu}'$ and $\ell_{n - n'} = i$
    \end{itemize}
  \end{itemize}
\end{definition}

\blankline

\noindent
In the above setting with $n' \geqslant n$, we say that the sequence
$\ell_0, h_0$, $\ell_1, h_1$, $\ldots$, $\ell_{n'-n}, h_{n'-n}$ is
leading from $i \in I_n$ up to ${i \mkern1mu}' \in I_{n'}$. Likewise for the case
where $n' \leqslant n$.

For example, for the slice sequence $( \slice_i )_{i=1}^r$ of
Figure~\ref{fig-five-slices}, to identify the guide belonging to the
guide-offset pair~$(g_2,1)$ of slice~$\slice_2$, the pair is more
precisely represented by the chunk $(g_2, 1, 3, 2)$, for the pair is
associated with index~$3 \in I_2$ of slice~$\slice_2$. Since for the
cuts $\gamma_{\, 2} : I_2 \to I_3$ and $\gamma_{\, 3} : I_3 \to I_4$
we have $\gamma_{\, 2}(3)$ and $\gamma_{\, 3}(3) = 3$, we have
$(g_2,1,3,2) \preccurlyeq (g_2,2,3,3) \preccurlyeq (g_2,3,3,4)$ via
the sequence $3,3,3,3$ connects $(g_2,1)$ and~$(g_2,2)$, and $3,3,3,3$
connecting $(g_2,2)$ and~$(g_2,3)$. (Hence the combination of
sequences $3,3,3,3,3,3$ connects $(g_2,1)$ and~$(g_2,3)$ directly.)
As no jumps from a low index~$\ell$ to a high index~$h$ needs to be
taken, we also have $(g_2,1,3,2) \succcurlyeq (g_2,2,3,3) \succcurlyeq
(g_2,3,3,4)$. Thus $(g_2,1,3,2) \equiv (g_2,2,3,3) \equiv (g_2,3,3,4)
\rc$. In fact, $\lc (g_2,1,3,2) ,\, (g_2,2,3,3) ,\, (g_2,3,3,4) \rc$
is an equivalence class for~$\X$ corresponding to the guide~$g_2$
(cf.\ Lemma~\ref{lemma-partial-order}). Differently, we have
$(g_2,1,3,2) \preccurlyeq (g_1,2,6,5)$ relating $g_2$ to the fourth
occurrence of~$g_1$ via the sequence $3,3,3,3,3,5,5,5$, for
example. Since there is a jump here from $\ell_2 = 3$ to~$h_2 = 5$, we
do not have $(g_2,1,3,2) \succcurlyeq (g_1,2,6,5)$. This reflects that
apparently the rewrite with this occurrence of~$g_1$ is on top of part
of the rewrite using~$g_2$ as guide.

\blankline

\noindent
Given a slice sequence~$\sigma$, the ordering~$\preccurlyeq$ on the
chunks of~$\sigma$ in~$\X$ gives rise to a partial ordering on the set
$\X / \mathord{\equiv}$ of equivalence classes of chunks. As we will
argue, the equivalence classes correspond to guides and their ordering
corresponds to the relative order in which the guides occur in a
rewrite sequence~$\varrho$ having the same yield as the slice
sequence~$\sigma$.

\begin{lemma}
  \label{lemma-partial-order}
  \begin{itemize}
  \item [(a)] The relation~$\preccurlyeq$ on~$\X$ is reflexive and
    transitive.
  \item [(b)] The relation $\equiv$ on~$\X$ such that $x \equiv y \iff
    x \preccurlyeq y \land y \preccurlyeq x$ is an equivalence
    relation.
  \item [(c)] The ordering~$\preccurlyeq$ on~$\X / \mathord{\equiv}$
    induced by~$\preccurlyeq$ on~$X$ by
  \begin{math}
    [x] \preccurlyeq [y] \iff \exists x' \in [x] \, \exists y' \in [y]
    \colon x' \preccurlyeq y'
  \end{math},
  makes $\X / \mathord{\equiv}$ a partial order.  
  \qed
  \end{itemize}
\end{lemma}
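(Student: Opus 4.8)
The plan is to reduce everything to part (a), since (b) and (c) are then purely formal consequences of $\preccurlyeq$ being a preorder. Once we know $\preccurlyeq$ is reflexive and transitive, the symmetric part $\equiv$ is automatically reflexive (from reflexivity of $\preccurlyeq$), symmetric (by its very definition), and transitive (from transitivity of $\preccurlyeq$), which gives (b); and the induced relation on $\X/\mathord{\equiv}$ is well defined, reflexive and transitive for the same reasons, while antisymmetry follows by the standard argument: if $[x] \preccurlyeq [y]$ and $[y] \preccurlyeq [x]$, then picking witnesses and chaining with $x \preccurlyeq x'$ and $y' \preccurlyeq y$ (valid because equivalent chunks are mutually $\preccurlyeq$) yields $x \preccurlyeq y$ and, symmetrically, $y \preccurlyeq x$, hence $[x] = [y]$. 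So the real content is (a).

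For (a), reflexivity is immediate: for a chunk $(g,q,i,n)$ take $n' = n$ and the one-term leading sequence $\ell_0 = h_0 = i$, which vacuously meets all cut conditions. The work is transitivity, and to organize it I would first recast $\preccurlyeq$ in terms of paths. Reading a leading sequence as a walk, each clause of Definition~\ref{def-ordering} is built from three elementary moves between nodes $(i,n)$ with $i \in I_n$: a right move $(i,n) \rightsquigarrow (i',n)$ whenever $i \leqslant i'$; a cut-up $(h,n) \rightsquigarrow (\gamma_n(h), n{+}1)$ when $h \in \dom(\gamma_n)$; and a cut-down $(j,n{+}1) \rightsquigarrow (\gamma_n^{-1}(j), n)$ when $j \in \rng(\gamma_n)$. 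With this reading, the first clause ($n' \geqslant n$) says exactly that there is a walk from $(i,n)$ to $(i',n')$ using only right moves and cut-ups, and the second clause ($n' \leqslant n$), read from the lower endpoint upward, says there is a walk using only right moves and cut-downs. Since right moves compose (transitivity of $\leqslant$) and can be inserted trivially ($i \leqslant i$), every such walk normalises to an alternating form $R\,U\,R\cdots U\,R$, respectively $R\,D\,R\cdots D\,R$.

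Transitivity then splits along the directions of the two given relations. If $x \preccurlyeq y$ and $y \preccurlyeq z$ use the same clause I simply concatenate the two walks: two $R,U$-walks give an $R,U$-walk with $n_z \geqslant n_y \geqslant n_x$, and two $R,D$-walks give an $R,D$-walk with $n_z \leqslant n_y \leqslant n_x$, so $x \preccurlyeq z$ directly. The genuine obstacle is the two mixed cases. In the up-then-down case the concatenated walk rises to the peak $y$ and then falls, and I must produce a monotone walk to $z$. The key local move is a straightening lemma: a cut-up immediately followed by a right move and a cut-down, i.e.\ $(a,n) \rightsquigarrow (\gamma_n(a), n{+}1) \rightsquigarrow (h, n{+}1) \rightsquigarrow (\gamma_n^{-1}(h), n)$ with $\gamma_n(a) \leqslant h$, collapses to a single right move $(a,n) \rightsquigarrow (\gamma_n^{-1}(h), n)$, since monotonicity of the partial injection $\gamma_n$ gives $\gamma_n^{-1}(h) \geqslant a$. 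Applying this at the peak (where, after normalisation, the walk reads $U\,R\,D$) lowers the peak by one level and merges the freed right move into its neighbours; inducting on the height of the peak above $\max(n_x,n_z)$ drives the walk down to a pure $R,U$- or $R,D$-walk, whence $x \preccurlyeq z$. The down-then-up case is dual: the walk descends to a valley at $y$, and the mirror-image straightening, a cut-down followed by a right move and a cut-up collapsing to a right move (now using monotonicity of $\gamma_n$ itself, so $\gamma_n(h) \geqslant a$), fills the valley, with induction on valley depth finishing it.

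The main obstacle, and the only place where the hypotheses on $\gamma_n$ really bite, is exactly these two straightening steps: the whole argument hinges on the cuts being \emph{monotone} partial injections, so that moving right at one level before crossing to an adjacent level and back lands weakly to the right at the original level. Everything else — reflexivity, the same-direction cases, the normalisation of walks, and the deductions of (b) and (c) — is routine once transitivity is in hand. I would therefore present the two straightening lemmas first, then assemble transitivity, and finally dispatch (b) and (c) in a single short paragraph.
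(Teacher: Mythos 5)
The paper states this lemma with a closing \qed and no proof at all, so there is nothing to compare your argument against; what you have written is a correct and complete proof that fills the gap. Your reduction of (b) and (c) to (a) is the standard preorder-to-poset passage and is fine, including the well-definedness and antisymmetry of the induced relation on $\X/\mathord{\equiv}$. Your reformulation of the two clauses of Definition~\ref{def-ordering} as walks built from right moves, cut-ups and cut-downs matches the definition exactly (the first clause is literally the alternating $R\,U\,R\cdots U\,R$ form, the second its $R\,D$ mirror read from the upper endpoint), so the same-direction cases of transitivity are indeed mere concatenation plus composition of right moves. The two straightening lemmas are the genuine content and both check out: at a peak the two elements joined by the right move at the top level are, respectively, the arrival point of a cut-up and the departure point of a cut-down, hence both lie in the range of the cut one level below, and injectivity together with monotonicity of that cut pushes the comparison down one level; at a valley the dual argument uses monotonicity directly on the domain side. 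The induction on the height of the peak above $\max(n,n'')$ (resp.\ depth of the valley) terminates in a monotone walk, which is one of the two clauses of the definition. This is exactly where the hypothesis that the cuts are monotone partial injections is consumed, and your proof isolates it cleanly; I see no gap.
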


\sloppypar{%
\noindent
The next lemma describes the form of the equivalence class holding a
chunk~$x = (g,q,i,n)$. Using the cuts, equivalent chunks can be found
backwards up to position~$n{-}q{+}1$ and forward up to position
$n{-}q{+}\#g$. These chunks together, $(g, 1, i_{n{-}q{+}1}, n{-}q{+}1)$,
$\ldots$, $(g, q, i_n, n)$, $\ldots$, $(g, \#g, i_{n{-}q{+}\#g},
n{-}q+\#g)$ span the guide~$g$ that is to be applied, in the rewrite
sequence to be constructed.
} 

\begin{lemma}
  \label{lemma-form-of-equivalence-classes}
  Let $\sigma = ( \slice_n )^{\#u}_{n=1}$ be a slice sequence for a
  string~$u$. Let $\X = \lc ( \, g_{n,i} ,q_{n,i} , i , n \, ) \mid 1
  \leqslant n \leqslant \#u, \ i \in I_n \rc$ be the set
  of chunks and choose $x \in \X$, say $x = (g,q,i,n)$. Put $p =
  n{-}q$. Then there exist $j_1 \in I_{p{+}1}$, $\ldots\,$, $j_{\#g}
  \in I_{p{+}\#g}$ such that
  \begin{math}
    [x] = \lc ( g , s , j_s , p+s ) \mid 1 \leqslant s \leqslant \#g \rc
  \end{math}.
  \qed
\end{lemma}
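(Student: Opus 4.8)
The plan is to prove the claimed identity by establishing both inclusions of $[x] = \lc ( g , s , j_s , p+s ) \mid 1 \leqslant s \leqslant \#g \rc$. The heart of the matter is that the equivalence~$\equiv$ never merges chunks belonging to two different guide-occurrences: it can only connect chunks that sit along a single cut-path with no index jumps. Once this rigidity is in place, both inclusions follow by tracing that cut-path.

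First I would construct $j_1, \ldots, j_{\#g}$ by following the occurrence of~$g$ through~$x$. Setting $j_q = i$, I walk backwards: as long as the current offset exceeds~$1$, the contrapositive of the slice-sequence condition $j \notin \rng( \gamma_{m-1} ) \implies q'_j = 1$ yields a $\gamma$-preimage at the previous position carrying the same guide~$g$ with offset one smaller, producing $j_{q-1}, \ldots, j_1$ at positions $n-1, \ldots, p+1$. Symmetrically, $i \notin \dom( \gamma_m ) \implies q_i = \#g_i$ lets me walk forward to obtain $j_{q+1}, \ldots, j_{\#g}$ at positions $n+1, \ldots, p+\#g$. The boundary conditions on $\slice_1$ and $\slice_{\#u}$ guarantee the walk remains inside $1, \ldots, \#u$ and terminates precisely at offsets~$1$ and~$\#g$. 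Since consecutive traced chunks are linked by a single cut with no jump, each satisfies $( g , s , j_s , p+s ) \preccurlyeq ( g , s{+}1 , j_{s+1} , p+s{+}1 )$ and, via the trivial chain in the $n' \leqslant n$ clause, the reverse relation as well; transitivity from Lemma~\ref{lemma-partial-order} then gives $( g , s , j_s , p+s ) \equiv x$ for every~$s$, proving the inclusion~$\supseteq$.

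For the converse I would take an arbitrary $y = ( g' , {q \mkern1mu}' , {i \mkern1mu}' , n' ) \in [x]$ and, as $\equiv$ is symmetric, assume $n' \geqslant n$. Expanding $x \preccurlyeq y$ gives a forward chain over the slices $I_n, \ldots, I_{n'}$ with entries~$a_k$, exits~$b_k \geqslant a_k$, and $\gamma_{n+k}( b_k ) = a_{k+1}$; expanding $y \preccurlyeq x$ gives, through the $n' \leqslant n$ clause of Definition~\ref{def-ordering}, a reverse chain over the same slices with entries~$d_k$, exits~$c_k \leqslant d_k$, and $\gamma_{n+k}( c_k ) = d_{k+1}$, where $a_0 = d_0 = i$ and $b_{n'-n} = c_{n'-n} = {i \mkern1mu}'$. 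Monotonicity of each~$\gamma_{n+k}$ drives a forward induction showing $a_k \geqslant d_k$, whence $b_k \geqslant a_k \geqslant d_k \geqslant c_k$, for all~$k$. Evaluating at the common endpoint~${i \mkern1mu}'$ then forces $a_{n'-n} = d_{n'-n} = {i \mkern1mu}'$, so both chains are jump-free at the last position.

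The main obstacle, and the step carrying the real weight, is propagating this rigidity back through every position. Here I run a backward induction: injectivity of~$\gamma_{n+k}$ turns $a_{k+1} = d_{k+1}$ into $b_k = c_k$, and combining $a_k \leqslant b_k = c_k \leqslant d_k$ with the already-established $a_k \geqslant d_k$ collapses everything to $a_k = b_k = c_k = d_k$. Thus the forward chain contains no index jumps and coincides with the single cut-path emanating from~$x$; consequently $g' = g$ and $y = ( g , {q \mkern1mu}' , j_{{q \mkern1mu}'} , p+{q \mkern1mu}' )$ with ${q \mkern1mu}' = q + ( n'-n )$, so $y$ is one of the listed chunks. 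This yields the inclusion~$\subseteq$ and, together with the first part, pins down~$[x]$ exactly.
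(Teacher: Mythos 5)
The paper states Lemma~\ref{lemma-form-of-equivalence-classes} without proof (it is marked \qed{} in the statement), so there is nothing to compare your argument against; judged on its own, your proof is correct and complete. The two halves are exactly what is needed: for $\supseteq$, the contrapositives of the cut conditions ($j \notin \rng(\gamma_{m})$ forces offset $1$, $i \notin \dom(\gamma_m)$ forces offset $\#g_i$) do let you extend the occurrence of $g$ backward to offset $1$ at position $p+1$ and forward to offset $\#g$ at position $p+\#g$, with the start/end-slice conditions keeping the walk inside $[1,\#u]$, and the jump-free two-element chains indeed witness both $\preccurlyeq$ and $\succcurlyeq$ between consecutive traced chunks. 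For $\subseteq$, the step that carries the weight is the rigidity argument, and your use of the two properties of the cuts is exactly right: monotonicity gives the forward invariant $a_k \geqslant d_k$ from $c_k \leqslant d_k \leqslant a_k \leqslant b_k$, the common endpoint $i'$ collapses the last position, and injectivity of $\gamma_{n+k}$ propagates $a_{k+1}=d_{k+1}$ back to $b_k=c_k$ and hence to $a_k=b_k=c_k=d_k$; a jump-free chain then follows the unique cut-path from $i$, so $y$ has guide $g$ and offset $q+(n'-n)$ as claimed. The only point stated a bit glibly is the reduction ``assume $n'\geqslant n$'': for $n'<n$ you should add the one-line observation that, since cuts are partial injections, the maximal cut-path through a chunk is unique, so $x$ lying on the path traced from $y$ implies $y$ lies on the path traced from $x$. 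That is a cosmetic, not a substantive, gap.
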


\noindent
We are now in a position to prove the reverse of
Theorem~\ref{theorem-from-rewrites-to-slices}.

\begin{theorem}
  \label{theorem-from-slices-to-rewrites}
  Let $\sigma$ be a slice sequence for a string~$u$. Then there exists
  a guided rewrite sequence~$\varrho$ for~$u$ such that $\yield(
  \varrho ) = \yield( \sigma )$.
\end{theorem}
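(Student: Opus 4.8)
The plan is to read off one guided rewrite step from each equivalence class of chunks, and to order these steps by linearising the partial order~$\preccurlyeq$ on $\X/\mathord{\equiv}$ supplied by Lemma~\ref{lemma-partial-order}. By Lemma~\ref{lemma-form-of-equivalence-classes}, every class $[x]$ with $x = (g,q,i,n)$ has the form $\lc (g,s,j_s,p{+}s) \mid 1 \leqslant s \leqslant \#g \rc$ with $p = n{-}q$, so it is associated with a single guide~$g$ and a single position~$p$; I associate to $[x]$ the guide-position pair $(g,p)$. Since $\X/\mathord{\equiv}$ is finite, I fix a linear extension $C_1, \ldots, C_m$ of~$\preccurlyeq$ and take $\varrho = ( g_k,p_k )^m_{k=1}$ to be the corresponding sequence of guide-position pairs.

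First I would check that $\varrho$ is a genuine guided rewrite sequence for~$u$. Condition~(i) is immediate, and condition~(ii) holds because the class associated with $(g,p)$ occupies exactly the positions $p{+}1, \ldots, p{+}\#g$ of~$\sigma$, all lying in $\lc 1, \ldots, \#u \rc$, which forces $0 \leqslant p \leqslant \#u - \#g$. For condition~(iii), note that each chunk $(g,s,j_s,p{+}s)$ in the class records a guide-offset pair $(g,s)$ in the slice $\slice_{p+s}$ for~$u$ at position $p{+}s$, so the slice requirement gives $u[p{+}s] \mbR g[s]$; ranging over $s = 1, \ldots, \#g$ yields $u[p{+}1, p{+}\#g] \mbR g$ as required. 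Transitivity of~$\R$ then guarantees, exactly as remarked after the definition of a guided rewrite sequence, that the induced reduction $u = u_0 \toG \cdots \toG u_m$ is well defined, independently of the chosen linearisation.

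The heart of the argument is the yield computation $\yield(\varrho) = \yield(\sigma)$, which I would carry out position by position. Fix a position~$n$ and write $\slice_n = ( g_{n,i},q_{n,i} )_{i \in I_n}$. By Lemma~\ref{lemma-form-of-equivalence-classes} each equivalence class meets position~$n$ in at most one chunk, so distinct indices of~$I_n$ lie in distinct classes. The key observation is that for indices $i < i'$ in~$I_n$ one has $( g_{n,i},q_{n,i},i,n ) \preccurlyeq ( g_{n,i'},q_{n,i'},i',n )$: this is the case $n' = n$ of Definition~\ref{def-ordering}, witnessed by $\ell_0 = i \leqslant h_0 = i'$ with no cut steps needed. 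Hence the class of the chunk at the maximal index $i_{\max} = \max(I_n)$ strictly dominates, in~$\preccurlyeq$, the class of every other chunk at position~$n$, and therefore appears last among the classes meeting~$n$ in any linear extension. Consequently the rewrite coming from $[i_{\max}]$ is the last one to overwrite position~$n$, installing there the symbol $g_{n,i_{\max}}[q_{n,i_{\max}}] = \yield( \slice_n )$; when $I_n = \emptyset$ the position is never touched and retains $u[n] = \yield( \slice_n )$. Reading this off at every position gives $\yield(\varrho) = \yield(\sigma)$.

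I expect the main obstacle to be this last step: proving that linearising~$\preccurlyeq$ really does reproduce the layering recorded by the slice indices, that is, that the topmost (highest-index) chunk at each position corresponds to the last applicable rewrite. Everything hinges on the two lemmas---Lemma~\ref{lemma-form-of-equivalence-classes} to see that a class is a single guide occupying consecutive positions, and Lemma~\ref{lemma-partial-order} together with the same-position instance of~$\preccurlyeq$ to see that the index order is respected---so the care lies in combining them correctly rather than in any isolated calculation.
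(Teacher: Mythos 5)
Your proof is correct, and it rests on exactly the two ingredients the paper uses---Lemma~\ref{lemma-partial-order} and Lemma~\ref{lemma-form-of-equivalence-classes}, together with the observation that for $i \leqslant i'$ in $I_n$ one has $(g,q,i,n) \preccurlyeq (g',q',i',n)$, so that the topmost chunk at each position lies in a class that is strictly above every other class covering that position---but it organizes the argument differently. The paper proceeds by induction on $\size{\X}$: it selects a single $\preccurlyeq$-maximal class, deletes it from~$\sigma$ to obtain a smaller slice sequence~$\sigma'$, invokes the induction hypothesis to get~$\varrho'$, and appends the corresponding pair $(g,p)$ as the final rewrite; the linear extension of $\X/\mathord{\equiv}$ is thus built implicitly, from the top down. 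You instead fix a linear extension once and for all and verify $\yield(\varrho) = \yield(\sigma)$ position by position with a ``last writer wins'' argument. The trade-off is mild but real: the paper's route localizes the yield computation to the one appended step, but it must check (and in fact only asserts) that the pruned~$\sigma'$ is again a slice sequence, i.e.\ that the cut conditions survive deleting a whole equivalence class; your direct construction avoids that verification entirely, at the cost of arguing explicitly that the rewrites overwriting position~$n$ are precisely the classes containing a chunk at position~$n$ (which Lemma~\ref{lemma-form-of-equivalence-classes} supplies, since a class occupies the consecutive positions $p{+}1,\ldots,p{+}\#g$) and that, by antisymmetry of $\preccurlyeq$ on $\X/\mathord{\equiv}$, the class of the maximal-index chunk comes last among these in any linearisation. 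Both points appear in your write-up, so I see no gap.
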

\begin{proof}
  Suppose $\sigma = ( \, \slice_n \, )^{\#u}_{n=1}$, $\slice_n = ( \,
  g_{i,n} , q_{i,n} \, )_{i \in I_n}$, for $n = 1 , \ldots, \#u$, and
  let $\X = \lc ( \, g_{n,i} ,q_{n,i} , i , n \, ) \mid 1 \leqslant n
  \leqslant \#u, \ i \in I_n \rc$ be the corresponding set of
  chunks. We proceed by induction on $\size{\X}$. Basis, $\size{\X} =
  0$: In this case every slice is empty and $\yield(\sigma) = \yield(
  \, \slice_1 \, ) \cdots \yield( \, \slice_{\#u} \, ) = u[1] \cdots
  \cdot u[\#u] = u$ and the empty guided rewrite sequence for~$u$ has
  also yield~$u$.

  Induction step, $\size{\X} \mathop{>} \mkern1mu 0$: Clearly, $\X /
  \mathord{\equiv}$ is finite and therefore we can choose, by
  Lemma~\ref{lemma-partial-order}, $x \in \X$ such that $[x]$~is
  maximal in $\X / \mathord{\equiv}$ with respect
  to~$\preccurlyeq$. By Lemma~\ref{lemma-form-of-equivalence-classes}
  we can assume $[x] = \lc {( g , s, i_s , p+s)} \mid 1 \leqslant s
  \leqslant \#g \rc$ for suitable~$p$ and indexes $i_s \in I_{p{+}s}$,
  for $s = 1, \ldots, \#g$. Note, by maximality of~$[x]$, the
  indexes~$i_s$ must be the maximum of~$I_{p{+}s}$. In particular,
  $\yield( \, \sigma \, ) [p+s] = \yield( \, \slice_{p{+}s} \, ) =
  g[s]$, for $s = 1, \ldots, \#g$.

  Now, consider the slice sequence $\sigma' = ( \, \slice'_n \,
  )^{\#u}_{n=1}$ where
  \begin{displaymath}
    \slice'_n = \left \lbrace
    \begin{array}{@{\ }lcl}
      \slice_n && 
      \text{ for $n = 1 , \ldots, p$ and $n = p {+} \#g {+} 1 ,
        \ldots, \#u$} \\
      ( \, g_{i,n} , q_{i,n} \, )_{i \in I_n \backslash \lbrace i_{n{-}p}
        \rbrace } &&
      \text{ for $n =  p {+} 1 , \ldots, p {+} \#g$}
    \end{array}
    \right .
  \end{displaymath}
  So, the slice sequence~$\sigma'$ is obtained from the slice
  sequence~$\sigma$ by leaving out the guide-offset pairs related to
  the particular occurrence of~$g$.

  Let~$\X'$ be the set of chunks of~$\sigma'$. Then $\size{\X'}
  \mathop{<} \mkern1mu \size{\X}$. By induction hypothesis we can find
  a guided rewrite sequence $\varrho' = ( \, g'_k , p'_k \, )^r_{k=1}$
  for~$u$ such that $\yield( \, \varrho' \, ) = \yield( \, \sigma' \,
  )$. Define the guided rewrite sequence $\varrho = ( \, g_k , p_k \,
  )^{r+1}_{k=1}$ by $g_k = g'_k$, $p_k = p'_k$ for $k = 1, \ldots, r$
  and $g_{r{+}1} = g$, $p_{r{+}1} = p$.  We have $0 \leqslant p
  \leqslant \#u {-} \#g$ and $u[ p{+}1 , p{+}\#g ] \mbR g$ since
  $\slice_{p{+}1} , \ldots , \slice_{p{+}\#g}$ are slices for
  $u[p{+}1], \ldots, u[p{+}\#g]$, respectively. So, $\varrho$~is a
  well-defined guided rewrite sequence for~$u$.

  It holds that $\yield( \, \varrho' \, ) \togp \yield( \, \varrho \,
  )$ as~$\varrho$ extends~$\varrho'$ with the pair~$(g,p)$. Therefore,
  \begin{displaymath}
    \yield( \varrho )[n] = \left \lbrace
    \begin{array}{lcl}
      \yield( \varrho' )[n] && \text{for $n = 1, \ldots, p$ and $n = 
        p {+} \#g {+} 1, \ldots, p {+} \#g$} \\
      g[n{-}p] && \text{for $n = p {+} 1, \ldots, p {+} \#g$}
    \end{array}
    \right .
  \end{displaymath}
  From this it follows, for any index~$n$, $1 \leqslant n \leqslant p$
  or $p{+}\#g{+}1 \leqslant n \leqslant \#u$, that
  \begin{math}
    \yield( \, \varrho \, )[n] = \yield( \, \varrho' \, )[n] = \yield(
    \, \sigma' \, )[n] = \yield( \, \sigma \, )[n],
  \end{math}
  and for any index~$n$, $p {+} 1 \leqslant n \leqslant p {+} \#g$, that
  \begin{math}
    \yield( \, \varrho \, )[n] = g[n{-}p] = \yield( \, \sigma \, )[n]
  \end{math}.
  As $\size{\yield( \, \varrho \, )} = \size{\yield(\, \sigma \, )} =
  \#u$, we obtain $\yield( \, \varrho \, ) = \yield( \, \sigma \, )$,
  as was to be shown.  
\end{proof}

\blankline

\noindent
For the slice sequence~$( \slice_i )_{i=1}^5$ of
Figure~\ref{fig-five-slices} we have the following equivalence classes
of chunks:
\begin{displaymath}
  \def\arraystretch{1.2}
  \begin{array}{@{}lcl@{\qquad}lcl}
    G_{\, 3} & = & \lc (g_3,1,1,3) \rc &
    G_{\, 2} & = & \lc (g_2,1,3,2) ,\, (g_2,2,3,3) ,\, (g_2,3,3,4) \rc \\
    G_{\, 1}^{\, 1} & = & \lc (g_1,1,2,1) ,\, (g_1,2,2,2) \rc &
    G_{\, 1}^{\, 3} & = & \lc (g_1,1,5,4) ,\, (g_1,2,5,5) \rc \\
    G_{\, 1}^{\, 2} & = & \lc (g_1,1,4,1) ,\, (g_1,2,4,2) \rc &
    G_{\, 1}^{\, 4} & = & \lc (g_1,1,6,4) ,\, (g_1,2,6,5) \rc
  \end{array}
  \def\arraystretch{1.3}
\end{displaymath}
Moreover, $G_{\, 3} \preccurlyeq G_{\, 1}^{\, 1} \preccurlyeq G_{\,
  2}$, $G_{\, 2} \preccurlyeq G_{\, 1}^{\, 2}$ and $G_{\, 2}
\preccurlyeq G_{\, 1}^{\, 3} \preccurlyeq G_{\, 1}^{\, 4}$. A possible
linearization is $G_{\, 3} \preccurlyeq G_{\, 1}^{\, 1} \preccurlyeq G_{\,
  2} \preccurlyeq G_{\, 1}^{\, 3} \preccurlyeq G_{\, 1}^{\, 4}
\preccurlyeq G_{\, 1}^{\, 2}$. This corresponds to the rewrite
sequence
\begin{displaymath}
  \mathit{ebcfa} \toG_{g_3,2} 
  \mathit{ebdfa} \toG_{g_1,0} 
  \mathit{fbdfa} \toG_{g_2,1} 
  \mathit{facea} \toG_{g_1,3} 
  \mathit{facfb} \toG_{g_1,3} 
  \mathit{facfb} \toG_{g_1,0} 
  \mathit{fbcfb}
\end{displaymath}
Note that the yield $\mathit{fbcfb}$ of this rewrite sequence is the
same as the yield of the sequence~(\ref{eq-example-reduction}) of
Example~\ref{ex-slice-sequence}. However, here the second rewrite
with~$g_1$ of~(\ref{eq-example-reduction}) has been moved to the
end. This does not effect the end result as the particular rewrites do
not overlap.




\section{Guided rewriting preserves regularity}
\label{sec-preservation}

Given a language~$L$ and a set of guides~$G$, the language~$L_G$ is
given as the set $\lc v \in \Sigmaast \mid \exists u \in L \colon u
\toGast v \rc$. One of the main results of this paper,
Theorem~\ref{thmmain} formulated in
Section~\ref{sec-guided-rewriting}, states that if $L$~is regular
than~$L_G$ is regular too. We will prove the theorem by constructing a
non-deterministic finite automaton accepting~$L_G$ from a
deterministic finite automaton accepting~$L$. The proof exploits the
correspondence of rewrite sequences and slice sequences,
Theorem~\ref{theorem-from-rewrites-to-slices} and
Theorem~\ref{theorem-from-slices-to-rewrites}. First we need an
auxiliary result to assure finiteness of the automaton for~$L_G$.

\begin{lemma}
  \label{lemma-finitely-many-slices}
  Let $G$ be a finite set of guides. Let $Z = \lc \slice \mid
  \text{$\slice$ repetition-free slice for~$a$ and~$G$, $a \in
    \Sigma$} \rc$. Then $Z$~is finite. Moreover, for every string~$u$
  and every rewrite sequence~$\varrho$ for~$u$, there exists a slice
  sequence~$\sigma$ for~$u$ consisting of slices from~$Z$ only such
  that $\yield(\sigma) = \yield(\varrho)$.
\end{lemma}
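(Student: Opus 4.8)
The plan is to prove the two claims separately. For finiteness of~$Z$, I would first note that the index set~$I$ of a slice serves only to linearly order its guide-offset pairs, so it suffices to count repetition-free slices up to order-isomorphic reindexing. Since~$\G$ is finite, there are only finitely many guide-offset pairs~$(g,q)$ with $g \in \G$ and $1 \leqslant q \leqslant \#g$, say $N = \sum_{g \in \G} \#g$ of them. A repetition-free slice contains each such pair at most once, so up to reindexing it is a linearly ordered subset of this finite pool; there are at most $\sum_{k=0}^{N} N! / (N{-}k)!$ of these, and the adjustment constraint $a \mbR g_i[q_i]$ only restricts the pool further. As~$\Sigma$ is finite, taking the union over all $a \in \Sigma$ keeps~$Z$ finite.

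For the second claim I would start from Theorem~\ref{theorem-from-rewrites-to-slices}, which already supplies a slice sequence~$\sigma$ for~$u$ with $\yield(\sigma) = \yield(\varrho)$, and then transform~$\sigma$ into a slice sequence whose slices are all repetition-free. The key is to work at the level of the equivalence classes $\X / \mathord{\equiv}$ of chunks. By Lemma~\ref{lemma-form-of-equivalence-classes}, each class has the form $\lc (g,s,j_s,p{+}s) \mid 1 \leqslant s \leqslant \#g \rc$; it is one occurrence of guide~$g$ spanning positions $p{+}1, \ldots, p{+}\#g$ and contributing the pair~$(g,s)$ to slice~$\slice_{p+s}$. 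I would then observe that a slice~$\slice_n$ fails to be repetition-free precisely when two distinct classes contribute the same pair~$(g,q)$ to it, and that two distinct classes contribute identical pairs to a common slice~$\slice_n$ only if they are occurrences of the same guide~$g$ at the same position~$p = n{-}q$. Hence all repetitions disappear as soon as the number of classes per guide-position pair~$(g,p)$ is reduced to one, since pairs built from different guides or different positions can never coincide in a single slice.

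To carry out this reduction I would repeatedly delete redundant classes. Deleting a whole equivalence class~$[x]$ from~$\sigma$---i.e.\ removing its pair $(g,s,j_s,p{+}s)$ from each slice~$\slice_{p+s}$, exactly the operation used in the proof of Theorem~\ref{theorem-from-slices-to-rewrites}---always produces a valid slice sequence: the cut~$\gamma_n$ is merely restricted to the surviving indices, which preserves monotonicity, injectivity and the three cut conditions, as well as the start- and end-slice conditions. What remains is to delete the \emph{right} copies so that the yield is unchanged. Here I would use that the construction behind Theorem~\ref{theorem-from-rewrites-to-slices} adds, at each rewrite step, the new guide-offset pairs as the maximal-index pairs at their positions (indeed $\yield(\sigma)[p{+}s] = g[s]$ forces the freshly added pair to be the one of maximal index there). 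Consequently, among several occurrences of the same~$(g,p)$ the last one applied dominates the earlier ones in index at every shared position, so an earlier copy is never the maximal-index pair that determines the yield. Deleting all but the last copy of each~$(g,p)$ therefore leaves every $\yield(\slice_n)$ unchanged, whence $\yield(\sigma') = \yield(\sigma) = \yield(\varrho)$, while making every slice repetition-free; by the first part these slices all lie in~$Z$.

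The step I expect to be the main obstacle is the yield-preservation argument: a priori, deleting a duplicated guide occurrence could alter the symbol yielded at some position if that occurrence happened to carry the maximal index there. Pinning down that the duplicate occurrences of a fixed~$(g,p)$ are linearly ordered, with the latest carrying the largest index at every position it spans, is what makes the deletions harmless; this in turn rests on the maximality of freshly added pairs in the construction of Theorem~\ref{theorem-from-rewrites-to-slices}.
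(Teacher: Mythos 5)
Your proposal is correct and follows essentially the same route as the paper's own proof: finiteness of~$Z$ by counting repetition-free sequences over the finite pool of guide-offset pairs, and for the second claim starting from Theorem~\ref{theorem-from-rewrites-to-slices} and deleting, via Lemma~\ref{lemma-form-of-equivalence-classes}, the whole equivalence class of each duplicated $(g,q)$ while keeping the topmost occurrence so that every slice's maximal-index pair, and hence the yield, is untouched. The only cosmetic difference is that the paper removes one ``lower'' duplicate at a time and inducts on the number of repetitions (justifying yield preservation by monotonicity of the cuts rather than by properties of the construction of Theorem~\ref{theorem-from-rewrites-to-slices}), whereas you delete all but the last copy of each $(g,p)$ at once; both variants are sound.
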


\begin{proof}[Proof sketch]
  Finiteness of~$Z$ is immediate: there are finitely many guide-offset
  pairs $(g,q)$, hence finitely many repetition-free finite sequences
  of them. Thus, there are only finitely many repetition-free slices.

  Now, let $\varrho$ be a rewrite sequence for a string~$u$. By
  Theorem~\ref{theorem-from-rewrites-to-slices} we can choose a slice
  sequence~$\sigma'$ such that $\yield(\sigma') = \yield(\varrho)$.
  Suppose $\sigma' = ( \slice_n )_{n=1}^{\#u}$ and $\slice_n = (
  g_{i,n}, q_{i,n} )_{i \in I_n}$ for $n = 1, \ldots, \#u$. By
  Lemma~\ref{lemma-form-of-equivalence-classes} it follows that given
  a repeated guide-offset pair $(g,q)$, say $(g,q) =
  (g_{i,n},q_{i,n})$ and $(g,q) = (g_{j,n},q_{j,n})$ for indexes $i <
  j$ in~$I_n$, we can delete the complete equivalence class of
  $(g_i,q_i,i,n)$ from slices $\slice_{n-{q}{+}1}$
  to~$\slice_{n{-}q{+}\#g}$, while retaining a slice
  sequence. 
  In fact, we are removing the `lower' occurrence of the guide~$g$.
  Moreover, the resulting slice sequence has the same yield
  as for all slices the topmost guide-offset pair remains
  untouched. The existence of a repetition-free slice
  sequence~$\sigma$ such that $\yield(\sigma) = \yield(\sigma')$,
  hence $\yield(\sigma) = \yield(\varrho)$, then follows by induction
  on the number of repetitions.
\end{proof}

\noindent
As a corollary we obtain that every rewrite sequence has a
repetition-free equivalent, an intuitive result which requires some
technicalities though to obtain directly.

We are now prepared to prove that guided rewriting preserves
regularity.

\begin{proof}[Proof of Theorem~\ref{thmmain}]
  Without loss of generality $\varepsilon \notin L$.
  Let $M = ( \Sigma, Q, {\to}, q_o, F )$ be a DFA accepting~$L$. We
  define the NFA $M' = ( \Sigma, Q', {\toprime}, q_0, F' )$ as
  follows: Let $q_F$ be a fresh state. Put $Q' = Q \cup ( Q
  \mathop{\times} Z ) \cup \lbrace q_F \rbrace$ with $Z$ as given by
  Lemma~\ref{lemma-finitely-many-slices}, $F' = \lbrace q_F 
  \rbrace$ and
  \begin{displaymath}
    \begin{array}{r@{\,}c@{\,}l@{\quad}l}
      q_0 & \toprimeepsilon & q_0 \times \zeta & \text{if $\zeta$ is a
        start slice} \\

      q \times \zeta & \toprimeb & \qprime \times \zeta' & \text{if $q \toa
        \qprime$, $a \mbR \zeta$, $\yield(\zeta) = b$, $\zeta \leadsto
        \zeta'$} \\ 

      q \times \zeta & \toprimeb & q_F & \text{if $\exists \mkern2mu
        \qprime \colon q \toa \qprime\in F$, $a \mbR \zeta$,
        $\yield(\zeta) = b$, $\zeta$ is an end slice}
    \end{array}
  \end{displaymath}
  Note, by Lemma~\ref{lemma-finitely-many-slices}, $Q'$ is a finite
  set of states. The automaton~$M'$ has only one final state,
  viz.~$q_F$.

  Suppose $v \in L_G$. Then there exist $u = a_1 \cdots a_s \in L$, a
  rewrite sequence $\varrho = ( g_k, p_k )_{k=1}^r$ and strings $u_0,
  u_1, \ldots, u_r$ such that $u = u_0$, $u_{k{-}1} \togkpk u_k$ for
  $k = 1, \ldots, r$, and $v = u_r$. Let, by
  Theorem~\ref{theorem-from-rewrites-to-slices} and
  Lemma~\ref{lemma-finitely-many-slices}, $\sigma$~be a slice sequence
  for~$u$ of repetition-free slices with $\yield(\sigma) =
  \yield(\varrho)$. Say $\sigma = ( \slice_n )_{n=1}^{\#u}$ and $\slice_n = (
  g_{i,n}, q_{i,n} )_{i \in I_n}$ for $n = 1, \ldots, \#u$. Let $q_0
  \toaone q_1 \cdots \toas q_{s} \in F$ be an accepting computation
  of~$M$ for~$u$. Then $q_0 \toprimeepsilon q_0 \mathord{\times}
  \slice_1 \toprimebone \cdots q_{s{-}1} {\times} \slice_{s}
  \toprimebs q_F$ is an accepting computation of~$M'$. Since we have
  $b_1 \cdots b_s$ = $\yield(\slice_1)$ $\cdots$ $\yield(\slice_s)$ =
  $\yield(\sigma) = v$, it follows that $v \in \L(M')$. So, $L_G
  \subseteq \L(M')$.

  Let $v = b_1 \cdots b_s$ be a string in~$\L(M')$. Given the
  definition of the transition relation on~$M'$, we can find states
  $q_0, q_1, \ldots, q_{s{-}1}$, repetition-free slices $\slice_1,
  \ldots \slice_{s}$ such that $\slice_n \leadsto \slice_{n{+}1}$ for
  $n = 1, \ldots, s{-}1$, and a computation $q_0 \toprimeepsilon q_0
  \mathord{\times} \slice_1 \toprimebone \cdots q_{s{-}1}
  \mathord{\times} \slice_s \toprimebs q_F$. Thus, there exist a final
  state~$q_s$ and a computation $q_0 \toaone q_1 \cdots q_{s{-}1}
  \toas q_s \in F$ such that $a_n \mbR \slice_s$ for $n = 1, \ldots,
  s$, i.e.\ $\slice_n$ is a slice for~$a_n$. Put $u = a_1 \cdots
  a_s$. Then $u \in L$, $( \slice_n )_{n=1}^{\#u}$ is a slice sequence
  for~$u$ and $\yield(\sigma) = v$. By
  Theorem~\ref{theorem-from-slices-to-rewrites} we can find a rewrite
  sequence~$\varrho$ for~$u$ such that $\yield(\varrho) =
  \yield(\sigma) = v$. It follows that $u \toGast v$ and $v \in
  L_G$. Thus, $\L(M') \subseteq L_G$. We conclude that $L_G = \L(M')$
  and regularity 
  of~$L_G$ follows.
\end{proof}

\noindent
Since $L \subseteq L_G$ the automaton~$M'$ should accept any word $a_1
\ldots a_s \in L$, $s > 0$. This can be verified as
follows. Let~$\zeta_{\, i}$ be the empty slice for~$a_i$, $i = 1
\ldots s$. Then $a_i \mbR \zeta_{\, i}$, i.e.\ $a_i = \yield(
\zeta_{\, i} )$, which holds by definition. Moreover, $\zeta_{\,
  1}$~is a start slice, $\zeta_{\, i} \leadsto \zeta_{\, i{+}1}$ for
$i = 1 \ldots {s{-}1}$, and $\zeta_{\, s}$~is an end slice. It follows
that we can turn an accepting computation of~$M$, say $q_0
\xrightarrow{a_1} q_1 \xrightarrow{a_2} {} \cdots {} \xrightarrow{a_s}
q_s \in F$ into an accepting computation of~$M'$:
\begin{math}
  q_0 \xrightarrow{\varepsilon}\mkern-3mu{}' q_0 \times \zeta_{\, 1}
  \xrightarrow{a_1}\mkern-3mu{}' q_1 \times \zeta_{\, 2}
  \xrightarrow{a_2}\mkern-3mu{}' {} \cdots {}
  \xrightarrow{a_{s{-}1}}\mkern-3mu{}' q_{s{-}1} \times \zeta_{\, s}
  \xrightarrow{a_s}\mkern-3mu{}' q_F \, .
\end{math}

    

    

\blankline

\noindent
We now return to a proof of Theorem~\ref{thmid} formulated in
Section~\ref{sec-ins-del} for which we want to apply
Theorem~\ref{thmmain}. For the latter theorem to apply we need a
preparatory transformation. The point is, in the setting of guided
insertion/deletion, strings are allowed to grow or shrink while guided
insertions and deletions are being applied, whereas in the setting of
guided rewriting the strings do not change length.

The key idea of the transformation is that every group of~$0$'s is
compressed to a single symbol. Let a language~$L$ over~$\Sigma$ and a
number~$k$ be given by Theorem~\ref{thmid}. So, $L$ does not contain
strings with $k$~or more~$0$s. We introduce $k$~fresh symbols $0_0,
0_1, \ldots, 0_{k-1}$. Put $\Theta = \lc 0_0 ,\, 0_1 ,\, \ldots ,\,
0_{k-1} \rc$. For any string~$u$ over $\Sigma$ not containing the
substring~$0^{\mkern1mu k}$, i.e.\ not containing $k$~or more zeros,
we define the string $\bar{u}$ over the alphabet $\overline{\Sigma} =
( \Sigma \setminus \singleton{0} ) \cup \{0_0, 0_1, \ldots, 0_{k-1}\}$
that is obtained from~$u$ by replacing every maximal pattern
$0^{\mkern1mu i}$ by the single symbol~$0_i$.  Note, between two
consecutive non-zero letters~$ab$ the symbol~$0_0$ is interspersed.
For instance, for $k \geqslant 3$, $\overline{10023} = 1 0_2 2 0_0
3$. Also note, that the compression scheme constitutes a 1--1
correspondence of~$\Sigmaast \cap \lc w \mid \text{$w$ has no
  substring~$0^{\mkern1mu k}$} \rc$ and $\bigl( \, \Theta \cdot
\Sigmanod \, \bigr)^\ast \cdot \Theta$. 

Next, we show that the above operation of compressing groups
of~$0$s preserves regularity using basic closure properties of the
class of regular languages, cf.~\cite[Section~3]{HU79}.

\blankline

\begin{lemma}
  \label{lemid}
  Let $L$ be a language without strings containing $0^{\mkern1mu k}$
  and let $\overline{L} = \{ \bar{u} \mid u \in L\}$. Then
  $\overline{L}$ is regular if and only if $L$ is regular.
\end{lemma}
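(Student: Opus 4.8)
The plan is to realize both the compression and its inverse as operations under which regularity is preserved, relying on the closure of the regular languages under homomorphic image, inverse homomorphism and intersection \cite[Section~3]{HU79}. The central object is the \emph{decompression homomorphism} $h : \overline{\Sigma}^\ast \to \Sigmaast$ defined by $h(a) = a$ for $a \in \Sigma \setminus \singleton{0}$ and $h(0_i) = 0^{\mkern1mu i}$ for $0 \leqslant i < k$ (so in particular $h(0_0) = \varepsilon$, an erasing homomorphism). By construction $h$ undoes the compression: for every string~$u$ over~$\Sigma$ without the substring $0^{\mkern1mu k}$ we have $h(\bar{u}) = u$, and $h$ restricted to the range $R = \bigl( \, \Theta \cdot \Sigmanod \, \bigr)^\ast \cdot \Theta$ of the compression is exactly the inverse of the $1$--$1$ correspondence noted above.

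In one direction, suppose $\overline{L}$ is regular. Then $L = \lc h(\bar{u}) \mid u \in L \rc = h(\overline{L})$, so $L$ is the homomorphic image of a regular language and hence regular.

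For the converse, suppose $L$ is regular. I would establish the set equality
\begin{displaymath}
  \overline{L} = h^{-1}(L) \cap R .
\end{displaymath}
The inclusion from left to right is immediate: for $u \in L$ we have $h(\bar{u}) = u \in L$, so $\bar{u} \in h^{-1}(L)$, and $\bar{u} \in R$ since $R$ is the range of the compression. For the reverse inclusion, take $w \in R$ with $h(w) \in L$, and put $u = h(w)$. As $L$ contains no string with substring $0^{\mkern1mu k}$, the string~$u$ has no such substring; since $w$ lies in the range~$R$ and the compression is a $1$--$1$ correspondence, $w$ must be the compression of~$u$, i.e.\ $w = \bar{u}$ with $u \in L$, whence $w \in \overline{L}$. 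Now $R$ is regular, and the regular languages are closed under inverse homomorphism and under intersection, so $\overline{L} = h^{-1}(L) \cap R$ is regular.

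The argument is entirely routine; the only point that needs care is the reverse inclusion in the displayed identity, where one must invoke the $1$--$1$ correspondence between strings without $0^{\mkern1mu k}$ and the elements of~$R$ in order to conclude that an arbitrary $w \in R$ with $h(w) \in L$ is genuinely the compression $\overline{h(w)}$, rather than merely some preimage of a word of~$L$ under the erasing homomorphism~$h$. Intersecting with~$R$ is exactly what rules out the spurious preimages that an erasing homomorphism would otherwise admit.
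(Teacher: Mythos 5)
Your proof is correct and follows essentially the same route as the paper's: one direction via $L = h(\overline{L})$ for the decompression homomorphism, the other via the identity $\overline{L} = h^{-1}(L) \cap \bigl( \Theta \cdot \Sigmanod \bigr)^\ast \cdot \Theta$ and closure under inverse homomorphism and intersection. Your spelled-out justification of the reverse inclusion (using that $h$ restricted to the range of the compression is injective) is a welcome elaboration of a step the paper leaves implicit.
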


\begin{proof}
  The language~$L$ is the homomorphic image of~$\overline{L}$ for $h
  \colon \overline{\Sigma}^\ast \to \Sigma^\ast$ with $h(0_i) = 0^{\,
    i}$ and $h(a) = a$ otherwise. So, if $\overline{L}$ is regular, so
  is~$L$. Reversely, $\overline{L} = { ( \Theta \cdot \Sigma )^\ast
    \cdot \Theta } \cap h^{-1}(L)$. Hence, if $L$ is regular, so
  is~$\overline{L}$.
\end{proof}

\noindent
With the above lemma in place we can give a proof of the preservation
of regularity by guided insertion/deletion.

\begin{proof}[Proof of Theorem~\ref{thmid}]
  Let $k$ be as given by the statement of the theorem. Obtain
  $\overline{L}$ by applying the compression of strings~$0^{\mkern1mu
    i}$, for~$i < k$, changing from the alphabet~$\Sigma$
  to~$\overline\Sigma$, as introduced above. By Lemma~\ref{lemid} we
  then have that $\overline{L}$ is regular. Let~$\overline{G}$ be
  obtained from~$G$, again by compression of strings~$0^{\mkern1mu
    i}$, for~$i < k$. Then $\overline{G}$ is a finite set of
  guides with respect to~$\overline\Sigma$.
  Now let the adjustment relation~$\R$ be the equivalence relation
  on~$\overline\Sigma$ generated by $0_{\mkern1mu i} \R 0_{\mkern1mu
    j}$, $0 \leqslant i ,\, j < k$. By Theorem~\ref{thmmain} we obtain
  that $\overline{L}_{\overline{G}}$ is regular.

  Next we note that if $u \idr v$ with respect to~$\Sigma$, then
  $\bar{u} \toGbar \bar{v}$ with respect to~$\overline\Sigma$. Vice
  versa, if $\bar{u} \toGbar \bar{v}$ and there exist (unique) $u$
  and~$v$ such that $u, v$ map to $\bar{u}, \bar{v}$ under
  compression, then $u \idr v$. It follows that
  $\overline{L}_{\overline{G}}$ and $\overline{L_{i/d}}$
  coincide. Finally, by another application of Lemma~\ref{lemid}, we
  conclude that $L_{i/d}$ is regular.
\end{proof}

\section{Related work and concluding remarks}
\label{seccr}

In this paper we have discussed abstract concepts of guided rewriting:
a more flexible notion focusing on insertions and deletions of a dummy
symbol, another more strict notion based on an equivalence
relation. Given a language~$L$ we considered the extended languages
$L_{i/d}$ and~$L_{\mkern1mu G}$ comprising the closure of~$L$ for the
two types of guided rewriting with guides from a finite set~$G$. In
particular, as our main result we proved that these closures preserve
regularity. For doing so we investigated the local effect of guided
rewriting on two consecutive string positions, leading to a novel
notion of a slice sequence. Finally, the theorem for adjustment-based
rewriting was proved by an automaton construction exploiting a slice
sequence characterization of guided rewriting. Via a compression
scheme for strings of dummy symbols, the theorem for guided
insertion/deletion followed.

Preservation of regularity by closing a language with respect to a
given notion of rewriting arises as a natural question. In Section
\ref{sec-ins-del} we observed that by closing the regular language
$\L( \, (ab)^\ast \,)$ under rewriting with respect to the single
rewrite rule $ba \to ab$ the resulting language is not regular. So, by
arbitrary string rewriting regularity is not necessarily preserved. A
couple of specific rewrite formats have been proposed in the
literature. In \cite{HW04} it was proved that regularity is preserved
by deleting string rewriting, where a string rewriting system is
called deleting if there exists a partial ordering on its alphabet
such that each letter in the right-hand side of a rule is less than
some letter in the corresponding left-hand side. In \cite{L08} it was
proved that regularity is preserved by so-called period expanding or
period reducing string rewriting.  When translated to the setting
of~\cite{Zan10:lata}, as also touched upon in
Section~\ref{sec-ins-del}, our present notion of guided insertions and
deletions allows for simultaneous insertion and deletion of the dummy
symbol. A phenomenon also supported by biological
findings. Remarkably, the more liberal guided insertion/deletion
approach preserves regularity, whereas in the more restricted
mechanism of~\cite{Zan10:lata}, not mixing insertions and deletions
per rewrite step, regularity is not preserved.  As another striking
difference with the mechanism of~\cite{Zan10:lata}, for that format it
was shown that strings~$u,v$ of length~$n$ exist satisfying $u
\Rightarrow^* v$, but the length of the reduction is at least
exponential in~$n$.  In our present format this is not the case: we
expect that our slice characterization of guided rewriting serves to
prove, that if $u \Rightarrow^* v$ then there is always a
corresponding reduction of length linear in the length of $u$
and~$v$. Details have not been worked out yet.

The notion of splicing, inspired by DNA recombination, has been
proposed by Head in~\cite{Hea87:bmb}. A so-called splicing rule is a
tuple $r = (u_1,v_1;u_2,v_2)$. Given two words $w_1 = x_1 u_1 v_1 y_1$
and $w_2 = x_2 u_2 v_2 y_2$ the rule~$r$ produces the word $w = x_1
u_1 v_2 y_2$. So, the word~$w_1$ is split in between $u_1$
and~$v_1$, the word~$w_2$ in between $u_2$ and~$v_2$ and the resulting
subwords $x_1 u_1$ and~$v_2 y_2$ are recombined into the
word~$w$. 
For splicing a closure result, reminiscent to
the one for guided rewriting considered in this paper, has been
established. Casted in our terminology, if $L$ is a regular language
and $S$~is a finite set of splicing rules, then $L_S$~is regular too,
cf.~\cite{CH91:dam,Pix96:dam}. Here, $L_S$ is the least language
containing~$L$ and closed under the splicing rules of~$S$.

The computational power of a variant of insertion-deletion systems was
studied in~\cite{TY03}. There deletion means that a string $u \alpha
v$ is replaced by $u v$ for a predefined finite set of triples $u,
\alpha, v$, while by insertion a string $u v$ is replaced by $u \alpha
v$ for another predefined finite set of triples $u, \alpha, v$. This
notion of insertion-deletion is quite different from ours, and seems
less related to biological RNA editing.  In the same vein are the
guided insertion/deletion systems of~\cite{BBD07}. There a hierarchy
of classes of insertion/deletion systems and related closure
properties are studied. Additionally, a non-mixing insertion/deletion
system that models part of the RNA-editing for kinetoplastids is
given.  A rather different application of term rewriting in the
setting of RNA is reported in~\cite{FWHH05}, where the rewrite engine
of Maude is exploited to predict the occurrence of specific patterns
in the spatial formation of RNA, with competitive precision compared
to techniques that are more frequently used in bioinformatics.

Possible future work includes investigation of preservation of
context-freedom and of lifting the bound on the number of
consecutive~$0$'s in Theorem \ref{thmid}. More specifically, for a
context-free language~$L$, does it hold, for a finite set of
guides~$G$, that $L_G$ is context-free too? Considering the set of
guides, a generalization to regular sets~$G$ is worthwhile
studying. Note that the counter-example given in
Section~\ref{sec-guided-rewriting} involves a non-regular set of
guides. So, if $L$~is regular and $G$~is regular, do we have that
$L_G$ is regular? Similarly for $L$~context-free. We also plan to
consider guided rewriting based on other types of adjustment
relations. In particular, rather than comparing strings
symbol-by-symbol, one can consider two strings compatible if they map
to the same string for a chosen string homomorphism. A prime example
would be the erasing of the dummy~$0$ in the context of
Section~\ref{sec-ins-del} for which we conjecture a variant of
Theorem~\ref{thmmain} to hold.

\bibliographystyle{eptcs}
\bibliography{mecbic2012}

\end{document}